\documentclass[sigconf,noacm]{acmart}
\AtBeginDocument{%
  }

\setcopyright{acmlicensed}
\copyrightyear{2018}
\acmYear{2018}
\acmDOI{XXXXXXX.XXXXXXX}
\acmConference[Conference acronym 'XX]{Make sure to enter the correct
  conference title from your rights confirmation email}{June 03--05,
  2018}{Woodstock, NY}
\acmISBN{978-1-4503-XXXX-X/2018/06}




\usepackage{hyperref}
\usepackage{xcolor}
\usepackage[commandnameprefix=ifneeded,commentmarkup=uwave]{changes}
\usepackage{url}
\usepackage[n,landau,notions,ff,mm]{cryptocode}
\usepackage{xspace}
\usepackage{placeins} 
\usepackage[framemethod=TikZ]{mdframed}
\usepackage{dashbox}
\usepackage[colorinlistoftodos]{todonotes}
\setuptodonotes{inline}
\usepackage{booktabs}
\usepackage{paralist}
\usepackage[noabbrev,capitalize]{cleveref}
\usepackage{multirow}
\usepackage{makecell}
\usepackage{colortbl}
\usepackage{xcolor} 
\usepackage{tikz}
\usetikzlibrary{shapes, positioning, arrows.meta}

\usepackage{pgf-umlsd}
\newcommand{\Mess}[4][0]{
  \stepcounter{seqlevel}
  \path
  (#2)+(0,-\theseqlevel*\unitfactor-0.7*\unitfactor) node (mess from) {};
  \addtocounter{seqlevel}{#1}
  \path
  (#4)+(0,-\theseqlevel*\unitfactor-0.7*\unitfactor) node (mess to) {};
  \draw[->,>=angle 60] (mess from) -- (mess to) node[midway, above]
  {\footnotesize #3};
}
\newcommand{\IMess}[4][0]{
  \stepcounter{seqlevel}
  \path
  (#2)+(0,-\theseqlevel*\unitfactor-0.7*\unitfactor) node (mess from) {};
  \addtocounter{seqlevel}{#1}
  \path
  (#4)+(0,-\theseqlevel*\unitfactor-0.7*\unitfactor) node (mess to) {};
  \path[->,>=angle 60] (mess from) -- (mess to) node[midway, above]
  {\footnotesize #3};
}

\definecolor{cblue}{rgb}{0.0, 0.28, 0.9}   
\definechangesauthor[name=duc, color=cblue]{duc}
\definechangesauthor[name=zhiyong, color=blue]{zhiyong}
\definecolor{pms}{rgb}{0.0, 0.28, 0.4}   
\definechangesauthor[name=pedro, color=pms]{pedro}
\definechangesauthor[name=mohsen, color=cyan]{mohsen}
\definechangesauthor[name=ranjit, color=red]{ranjit}
\newcommand{%
    
    \import{./figures/}{.pdf_tex}
}[2][1]{%
    
    \import{./figures/}{#2.pdf_tex}
}
\pdfsuppresswarningpagegroup=1

\newcommand{\glcomment}[1]{\texttt{\scriptsize \color{gray}{// #1}}}


\newcommand{\fun}[1]{\ensuremath{\mathtt{#1}}\xspace}
\newcommand{\key}[1]{\ensuremath{\mathsf{#1}}\xspace}
\newcommand{\pparagraph}[1]{\noindent \textbf{#1.}}

\newcommand{\bset}{\{0,1\}}

\newcommand{\FF}{\mathbb{F}}
\newcommand{\NN}{\mathbb{N}}

\newcommand{\poly}{\mathsf{poly}}
\newcommand{\negl}{\mathsf{negl}}
\newcommand{\adv}{\ensuremath{\mathcal{A}}}
\newcommand{\bdv}{\ensuremath{\mathcal{B}}}
\newcommand{\ddv}{\ensuremath{\mathcal{D}}}
\newcommand{\sdv}{\ensuremath{\mathcal{S}}}
\newcommand{\oracle}{\mathcal{O}}
\newcommand{\support}{\mathsf{SUPP}}

\newtheorem{theorem}{Theorem}
\newtheorem{claim}{Claim}

\newtheorem{definition}{Definition}
\newtheorem{lemma}{Lemma}


\newcommand{\rt}{\ensuremath{\mathsf{root}}\xspace}
\newcommand{\initialize}{\fun{Init}}
\newcommand{\mkverify}{\fun{Verify}}
\newcommand{\prove}{\fun{Prove}}

\newcommand{\zksnark}{zk-SNARK\xspace}
\newcommand{\zkschemeinitial}{\ensuremath{\Pi}\xspace}
\newcommand{\crs}{\key{crs}\xspace}

\newcommand{\zkprove}{\ensuremath{\zkschemeinitial.\fun{Prove}}}
\newcommand{\zkverify}{\ensuremath{\zkschemeinitial.\fun{Verify}}}
\newcommand{\statement}{\key{stmt}}

\newcommand{\zkproof}{\ensuremath{\pi_{\mathsf{zk}}}\xspace}
\newcommand{\CrsGen}{\mathsf{CrsGen}}
\newcommand{\CrsSim}{\mathsf{CrsSim}}
\newcommand{\PrvSim}{\mathsf{PrvSim}}

\newcommand{\tagscheme}{\ensuremath{\mathsf{TAG}}\xspace}
\newcommand{\tagpp}{\ensuremath{\pp_{\tagscheme}}\xspace}
\newcommand{\tagsetup}{\ensuremath{\mathtt{TagSetup}}\xspace}
\newcommand{\tagkeygen}{\ensuremath{\mathtt{TagKGen}}\xspace}
\newcommand{\tageval}{\ensuremath{\mathtt{TagEval}}\xspace}
\newcommand{\tagOnewayExp}{\ensuremath{\mathsf{OneWay}_{\tagscheme, \adv}}\xspace}
\newcommand{\tagPRExp}{\ensuremath{\mathsf{PR}_{\tagscheme, \adv}}\xspace}
\newcommand{\nulli}{\ensuremath{\mathsf{tag}}\xspace}

\newcommand{\encsetup}{\ensuremath{\mathtt{EncSetup}}}
\newcommand{\enckeygen}{\ensuremath{\mathtt{EncKGen}}}
\newcommand{\encrypt}{\ensuremath{\mathtt{Encrypt}}}
\newcommand{\decrypt}{\ensuremath{\mathtt{Decrypt}}}
\newcommand{\encpp}{\ensuremath{\pp_{enc}}}

\newcommand{\DDM}{\ensuremath{\mathtt{DTL}\xspace}}
\newcommand{\fixed}{\ensuremath{\mathsf{fixed}\xspace}}
\newcommand{\arbit}{\ensuremath{\mathsf{arb}\xspace}}
\newcommand{\fDDM}{\ensuremath{\DDM_\fixed\xspace}}
\newcommand{\aDDM}{\ensuremath{\DDM_\arbit\xspace}}
\newcommand{\dcmsetup}{\fun{Setup}}
\newcommand{\dcmcreatecoin}{\fun{Create}}
\newcommand{\dcmredeemcoin}{\fun{Redeem}}
\newcommand{\dcmaccumulatecoin}{\fun{Accumulate}}

\newcommand{\dcmverify}{\fun{Verify}}

\newcommand{\dcmcoinpk}{\key{cpk}\xspace}
\newcommand{\dcmcoinsk}{\key{csk}}
\newcommand{\dcmcoinaccstate}{\key{st}}

\newcommand{\dcmtag}{\key{tag}}
\newcommand{\dcmproof}{\ensuremath{\pi}}

\newcommand{\seqx}{\ensuremath{(x_i)_{i=1}^n}\xspace}

\newcommand{\dcmncoins}{\ensuremath{(\dcmcoinpk_i)_{i=1}^n}}

\newcommand{\dcmexpds}{\mathsf{ExpOneMoreRedeem}}
\newcommand{\dcmexptheft}{\mathsf{ExpTheft}}
\newcommand{\dcmexpunlink}{\mathsf{ExpUnlink}}

\newcommand{\dcmcoinlist}{C}
\newcommand{\dcmredeemlist}{R}

\newcommand{\dcmdata}{\key{data}}

\newcommand{\dcmnonslander}{\mathsf{ExpNSlander}}
\newcommand{\drootlist}{\mathsf{AccList}_{wdr}}
\newcommand{\dpl}{\ensuremath{\mathsf{AccHistory}}\xspace}

\newcommand{\wnl}{\ensuremath{\mathsf{TagList}}\xspace}

\newcommand{\acceptdeposit}{\fun{AcceptDeposit\xspace}}
\newcommand{\issueconfiwithdraw}{\fun{UCWithdraw\xspace}}
\newcommand{\issuewithdraw}{\fun{IssueWithdraw\xspace}}
\newcommand{\acceptconfideposit}{\fun{ConfidentialDeposit\xspace}}

\newcommand{\setup}{\fun{Setup}}
\newcommand{\witness}{\ensuremath{\mathsf{wit}}\xspace}
\newcommand{\cm}{\mathsf{cm}}
\newcommand{\commit}{\ensuremath{\mathsf{COM}\xspace}}
\newcommand{\commitcreate}{\fun{Commit}\xspace}

\newcommand{\ek}{\key{ek}\xspace}
\newcommand{\dk}{\key{dk}\xspace}
\newcommand{\bal}{\ensuremath{\key{bal}\xspace}}
\newcommand{\pk}{\key{prvK}\xspace}
\newcommand{\vk}{\key{vrfyK}\xspace}




\newcommand{\sparam}{\ensuremath{1^\lambda}\xspace}
\newcommand{\define}{\ensuremath{:=}\xspace}
\newcommand{\bit}{b\xspace}


\newcommand{\CreateDTX}{\fun{CreateDepositTx}\xspace}
\newcommand{\CreateCFWTX}{\fun{CreateUCWithdrawingTx}\xspace}
\newcommand{\CreateWTX}{\fun{CreateWithdrawTx}\xspace}
\newcommand{\CreateCFTX}{\fun{CreateCDepositTx}\xspace}


\newcommand{\pp}{\key{pp}\xspace}


\newcommand{\sample}{\ensuremath{\xleftarrow{\$}}\xspace}

\newcommand{\xcoin}{\ensuremath{\mathsf{amt}}\xspace}



\newcommand{\amt}{\ensuremath{\mathsf{amt}}\xspace}

\newcommand{\com}{\mathsf{com}\xspace}


\newcommand{\addr}{\ensuremath{\mathsf{addr}}}
\newcommand{\cdtl}{\ensuremath{\mathsf{C}_\mathsf{DTL}}\xspace}
\newcommand{\ccon}{\ensuremath{\mathsf{C}_\mathsf{Conf}}\xspace}

\newcommand{\tx}{\textsf{tx}\xspace}
\newcommand{\txd}{\textsf{tx}_{\mathsf{Deposit}}\xspace}
\newcommand{\txcd}{\textsf{tx}_{\mathsf{CDeposit}}\xspace}

\newcommand{\txw}{\textsf{tx}_{\mathsf{withdraw}}\xspace}
\newcommand{\txcw}{\textsf{tx}_{\mathsf{UCWithdraw}}\xspace}

\newcommand{\set}[1]{\ensuremath{\{#1\}}\xspace}


\pdfsuppresswarningpagegroup=1



\newcommand{\tree}{\ensuremath{\mathsf{MT}\xspace}}


\newcommand{\adsproof}{\ensuremath{\mathsf{path}}}







\def \ifempty#1{\def\temp{#1} \ifx\temp\empty }





\usepackage{amsfonts}

\usepackage{amsthm}

\usepackage{soul,xcolor}

\usepackage{enumitem}

\sethlcolor{red}

\newcommand*\rcircled[1]{\tikz[baseline=(char.base)]{
    \node[fill=black,text=white, shape=circle,draw=black,inner sep=.6pt] (char) {#1};}}

    \newcommand{\repeattheorem}[1]{%
  \begingroup
  \renewcommand{\thetheorem}{\ref{#1}}%
  \expandafter\expandafter\expandafter\theorem
  \csname reptheorem@#1\endcsname
  \endtheorem
  \endgroup
}
\newcommand*\emptycirc[1][1ex]{\tikz\draw (0,0) circle (#1);} 
\newcommand*\halfcirc[1][1ex]{%
  \begin{tikzpicture}
  \draw[fill] (0,0)-- (90:#1) arc (90:270:#1) -- cycle ;
  \draw (0,0) circle (#1);
  \end{tikzpicture}}
\newcommand*\fullcirc[1][1ex]{\tikz\fill (0,0) circle (#1);} 

\NewEnviron{reptheorem}[1]{%
  \global\expandafter\xdef\csname reptheorem@#1\endcsname{%
    \unexpanded\expandafter{\BODY}%
  }%
  \expandafter\theorem\BODY\unskip\label{#1}\endtheorem
}

\settopmatter{printacmref=false,printfolios=true} 
\settopmatter{printacmref=false}
\renewcommand\footnotetextcopyrightpermission[1]{}
\pagestyle{plain}
\begin{document}

\title{DTL: Data Tumbling Layer \\ A Composable Unlinkability for Smart Contracts}

\author{Mohsen Minaei}
\affiliation{%
  \institution{Visa Research}
  \country{}
}

\author{Pedro Moreno-Sanchez}
\affiliation{%
  \institution{IMDEA Software Institute, Visa Research, MPI-SP}
  \country{}
}

\author{Zhiyong Fang}
\affiliation{%
  \institution{Texas A\&M University}
  \country{}
}

\author{Srinivasan Raghuraman}
\affiliation{%
  \institution{Visa Research and MIT}
  \country{}
}

\author{Navid Alamati}
\affiliation{%
  \institution{Visa Research}
  \country{}
}

\author{Panagiotis Chatzigiannis}
\affiliation{%
  \institution{Visa Research}
  \country{}
}

\author{Ranjit Kumaresan}
\affiliation{%
  \institution{Visa Research}
  \country{}
}

\author{Duc V. Le}
\affiliation{%
  \institution{Visa Research}
  \country{}
}








\renewcommand{\shortauthors}{Trovato et al.}

\begin{abstract}
    We propose Data Tumbling Layer (DTL), a cryptographic scheme for non-interactive
    data tumbling. The core concept is to enable users to \emph{commit} to specific data and subsequently \emph{re-use} to the encrypted version of these data across different applications while removing the link to the previous data commit action.
    We define the following security and privacy notions for DTL: 
    \begin{inparaenum}[(i)]
        \item \emph{no one-more redemption}: a malicious user cannot redeem and
            use the same data more than the number of times they have committed the
            data; 
        \item \emph{theft prevention}: a malicious user cannot use data that has
            not been committed by them; 
        \item \emph{non-slanderabilty}: a malicious user cannot prevent an
            honest user from using their previously committed data; and
        \item \emph{unlinkability}: a malicious user cannot link tainted data
            from an honest user to the corresponding data after it has been tumbled. 
    \end{inparaenum}

    {To showcase the practicality of DTL, we use DTL to realize applications for}
    \begin{inparaenum}[(a)]
        \item
        unlinkable fixed-amount payments; 
        \item  
        unlinkable and confidential payments for variable amounts; 
        \item unlinkable {weighted} voting protocol.
    \end{inparaenum}
    {Finally, we implemented and evaluated all the proposed applications. 
    For the unlinkable and confidential payment application, a user can initiate such a transaction in less than $1.5s$ on a personal laptop. In
terms of on-chain verification, the gas cost is less than $1.8$ million.}

\end{abstract}



\keywords{}


\maketitle

\section{Introduction}
\label{sec:intro}

Distributed Ledger Technology is poised to revolutionize traditional banking by offering a secure, transparent, and efficient platform for financial transactions. Banks adopting distributed ledger technologies can significantly reduce operational costs and fraud risk, automate compliance, and provide real-time transaction processing. 
The robust and scalable nature of blockchain applications is further exemplified by the success of Ethereum, which supports more than $44$ million applications and oversees assets exceeding $224$ billion USD. 
This integration not only positions banks at the forefront of financial innovation, but also opens up new revenue avenues by catering to the burgeoning demand for decentralized financial services. 

However, distributed ledgers are commonly known for their inability to provide user privacy, a factor that hampers widespread adoption. Transactions on these networks publicly disclose details, such as transaction amounts and the addresses of both senders and recipients. Although this information does not directly reveal an individual's identity, the publicly accessible data can be subject to analysis,  potentially compromising a user's privacy.

In response to this weakness, both the academic and practitioner communities have proposed solutions that address various aspects of privacy, including unlinkability (e.g.,  AMR~\cite{le2020amr}), confidentiality (e.g.,  Zether~\cite{zether-bunz-2020}), and untraceability (e.g., Stealth Address~\cite{lai2019omniring}). 
These solutions offer a comprehensive method to protect user information. They can be broadly classified into \emph{(i) add-on privacy} solutions, which are built on existing non-private blockchains, and \emph{(ii) private-by-design blockchains}, like Monero~\cite{Noether2016-Ledger} and Zcash~\cite{sasson2014zerocash}.

However, these privacy-enhancing approaches lack \emph{composability} as they are specifically tailored for functions such as money transfers or obfuscating payment graphs, without the ability to integrate with other existing on-chain applications. Consequently, an important research question emerges: 
\begin{center}
\emph{How can we design privacy-enhancing solutions that can be composed with existing on-chain applications to enhance user privacy throughout the entire system?}
\end{center}


We believe that modularity and composability are fundamental
properties that are often overlooked by both practitioners and
academics in blockchain privacy solutions. While existing approaches
like mixers and privacy-preserving protocols tend to be tightly coupled
to specific applications (e.g., token transfers) or implement privacy
features from scratch, this work introduces a more systematic approach.
We extract unlinkability as a core primitive and develop it into a
composable building block - similar to how Tor provides a reusable
privacy layer for network communications. Our formalized primitive
can be seamlessly integrated into diverse smart contract applications
to provide unlinkability guarantees. By generalizing unlinkability
techniques to support both fixed and arbitrary data types, we enable
developers to add privacy features to complex smart contract logic
without reimplementing cryptographic protocols. This modular design
facilitates enhanced flexibility in building privacy-preserving systems
while maintaining strong security properties.

\pparagraph{Our Approach} Our aim is to provide a \emph{data tumbling layer} DTL, essentially an exchange of data among users, autonomously managed by a smart contract, \cdtl. 
We illustrate our approach in~\cref{fig:approach overview}. 
In more detail, assume a set of users, each of which wants to commit a piece of data, $\dcmdata_i \in \bset^{*}$ (e.g., $\dcmdata_i$ can be an integer representing a number of coins). In the first phase~\rcircled{1}, each user $u_i$ \emph{commits} the data $\dcmdata_i$ to the smart contract, $\cdtl$. In the second phase~\rcircled{2}, each user $u_j$ \emph{redeems} the (encrypted) version of data $\dcmdata_j$ from the smart contract for use in other applications.
Our proposed DTL addresses a number of technical challenges, namely \emph{security}, \emph{privacy}, and \emph{composability}. 

\begin{figure}[t]
    \centering
    \includegraphics[width=.6\columnwidth]{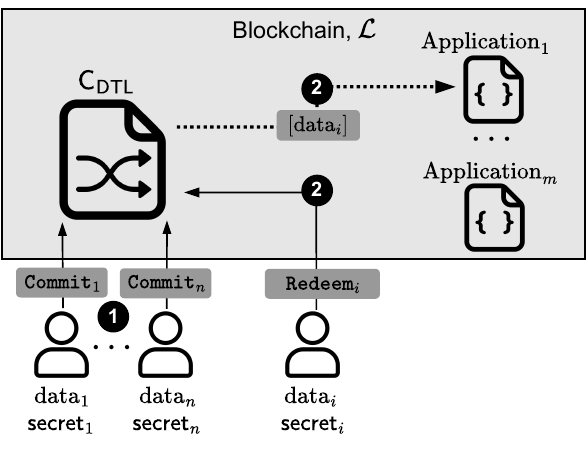}
    \vspace{-0.5cm}
    \caption{DTL Composability: Users leverage DTL to achieve unlinkability when interacting with other on-chain applications. Here, $[\dcmdata_i]$ denotes the encrypted version of $\dcmdata_i$.
    }
    \label{fig:approach overview}
\end{figure}

\subsection{Technical Challenges}

\pparagraph{Security} 
The correct execution of the code in a smart contract is verified by the distributed ledger operators (e.g., miners in Ethereum). However, this correctness guarantee does not suffice to provide smart contract users with guarantees about the security of their mixed data. For instance, the ownership of data $\dcmdata_i$ by user $u_i$ is often encoded as a cryptographic operation. Since a smart contract can, in principle, be invoked by any user, it opens the door for an adversary to steal the victim's data by breaching the cryptography used in the smart contract. In the data tumbling problem, the following security properties are of interest:
\begin{itemize}[leftmargin=0.5cm]
    \item \emph{Theft prevention.} An adversary must not be able to use data $\dcmdata_i$ previously committed by an honest user $u_i$. 
    \item \emph{Non-slanderability.} Assume that honest user $u_i$ commits data $\dcmdata_i$ into the smart contract. Then, the adversary must not be able to prevent the honest user from taking the data $\dcmdata_i$ to use it in other applications. 
\end{itemize}

\pparagraph{Privacy} 
The sensitivity of financial applications makes users demand strong privacy guarantees. The largely refuted hypothesis that pseudonyms (as used in Ethereum) provide a meaningful notion of privacy, has aroused scientific and industry interest for formally defined privacy notions and cryptographic protocols for blockchain applications with formally proven privacy guarantees. In particular, in the data tumbling problem, the following privacy notions are of interest: 
\begin{itemize}[leftmargin=0.5cm]
    \item \emph{Unlinkability for fixed data inputs.} Assume that honest users, $u_1, \ldots, u_n$, have committed their data, $\dcmdata_1, \ldots, \dcmdata_n$, to the smart contract where all $\dcmdata_i$ are the same (e.g., the same integer representing a transaction amount) and, in a second phase, these honest users have used $\dcmdata_1, \ldots, \dcmdata_n$ from the smart contract. In such a setting, an adversary with access to the information stored in the smart contract must not be able to link the committing and redeeming operations of user $u_i$. 
    \item \emph{Unlinkability for arbitrary data inputs.} For applications where the data being committed is unique (e.g., NFT), we want to offer the same guarantee as for the redeeming of fixed data inputs. In particular, an adversary cannot determine whether a specific data $\dcmdata_i$ committed previously is being redeemed.
\end{itemize}


\pparagraph{Composability} 
In distributed ledgers such as Ethereum, every smart contract is a Lego of sorts, one should be able to use the functionality available in one smart contract as a building block in the design of a new smart contract. Similar to other areas such as cryptography or software development, \emph{composability} of smart contracts is crucial to allow modular designs that build upon other, well-studied smart contracts providing formally proven security and privacy guarantees. Assume that a user $u_i$ has committed data $d_i$ to our data tumbling layer and later wants to use the (encrypted) version of the data in different applications. Such data should be poised for use as input for another smart contract without the need for further processing, while concurrently, the data tumbling layer should be capable of verifying data ownership and preventing \emph{unauthorized} use of the data.

Hence, in the data tumbling problem, the following properties are of interest:
\begin{itemize}[leftmargin=0.5cm]
\item \emph{Correctness.} A user who commits to the data previously should be able to redeem the same data to use in other applications. 
\item \emph{No one-more redemption.} A user who commits $n$ data inputs, should not be able to redeem more than $n$ data outputs. 
\end{itemize}


\subsection{Our Contributions}
In summary, our main contributions are as follows:
\begin{itemize}[leftmargin=0.5cm]
    \item We introduce a cryptographic primitive termed the "Data Tumbling Layer" (DTL). We then generalize this to accommodate arbitrary data. We formalize crucial properties of DTL such as \emph{theft prevention}, \emph{non-slanderability}, \emph{unlinkability}, and \emph{no one-more redeeming}.  
    \item We provide two concrete DTL constructions: one for a fixed data input and another for arbitrary data input. The security and privacy of these constructions are cryptographically proven. 
    \item We demonstrate that both DTL constructions satisfy the composability property by utilizing DTL as a building block to construct applications with unlinkability features such as unlinkable fixed-amount payment,  unlinkable confidential payment, and unlinkable confidential weighted voting.
    The goal of composability in this work is to have a single contract that can be used to 
    provide unlinkability for different applications. 
    \item Finally, we implement the proposed applications, demonstrating that they are practical to run on existing blockchains like Ethereum.
\end{itemize}

\section{Related Work}
\label{sec:related-work}

\begin{table*}[t]
\centering
\caption{Comparisons with previous works.}
\resizebox{2\columnwidth}{!}{%
    \begin{tabular}{c@{}cccccc@{}}
    \toprule
    & & \textbf{Anonymity Set} & \textbf{Confidential Amount} & \makecell{\textbf{Composability}\\ (Plug-and-Play \\with minimal changes)}  & \textbf{Availability} &\textbf{Applications} \\
    \midrule
    \multirow{4}{*}{\makecell{Privacy-preserving \\ Cryptocurrencies}} & 
    Monero & $ <2^4$ & \fullcirc & \emptycirc & \fullcirc  & Unlinkable and Confidential Payment \\
    & Zerocoin & Any & \emptycirc & \emptycirc &\fullcirc & Unlinkable and Confidential Payment \\
    & Zerocash & Any & \fullcirc & \emptycirc &\fullcirc & Unlinkable and Confidential Payment \\
    & Quisquis, Monero & $2^4$ & \fullcirc & \emptycirc & \fullcirc  & Unlinkable and Confidential Payment \\
    & Veksel & Any & \fullcirc & \emptycirc & \fullcirc  & Unlinkable and Confidential Payment \\
    \hline
    \multirow{4}{*}{\makecell{Off-chain Privacy\\ Solutions}} 
    & TumbleBit & Any & \emptycirc  & \emptycirc & \emptycirc  & Unlinkable Fixed-Amount Payment \\
    & A2L, A2L+ & Any & \emptycirc  & \emptycirc & \emptycirc  & Unlinkable Fixed-amount Payment \\ 
    & Blindhub  & Any & \fullcirc & \emptycirc & \emptycirc  & Unlinkable and Confidential Payment \\
    & Accio & Any & \fullcirc & \emptycirc & \emptycirc & Unlinkable and Confidential Payment \\\hline
    \multirow{5}{*}{\makecell{On-chain Privacy \\ Solutions}} 
    & CoinShuffle/CoinShuffle++ & Small & \emptycirc & \emptycirc  & \emptycirc  & Unlinkable Fixed-Amount Payment \\
    & Mobius & $<10$ & \emptycirc & \halfcirc  & \fullcirc & Unlinkable Fixed-Amount Payment \\
    & AMR & Any & \emptycirc & \halfcirc & \fullcirc  & Unlinkable Fixed-Amount Payment \\
    & Anonymous Zether & $<2^5$ & \fullcirc & \halfcirc & \fullcirc & Unlinkable and Confidential Payment \\
    \hline
    & \multirow{3}{*}{Ours} & \multirow{3}{*}{Any} & \multirow{3}{*}{\fullcirc} & \multirow{3}{*}{\fullcirc}  & \multirow{3}{*}{\fullcirc} & \multirow{3}{*}{\makecell{Unlinkable Fixed-Amount Payment (\added{\ref{subsec:fixed-amount}}),\\
    Unlinkable and Confidential Payment~(\added{\ref{sub:ddmzether}}),\\
    Unlinkable Weighted Voting~(\added{\ref{sub:ddmvoting}})}} \\ 
    \\ 
    \\
    \bottomrule
\end{tabular}
}
                                                            \end{table*}
This work aims to develop a data tumbling layer for blockchains, offering
security, privacy, and composability. This layer would allow for seamless
integration with blockchain applications, ensuring the unlinkability of
transactions. Achieving these properties simultaneously presents a significant challenge.
Below, we explore several existing attempts to design solutions that fulfill
these requirements.

\pparagraph{Private By Design Blockchain} 
Cryptocurrencies,
such as Monero~\cite{noether2014review}, Zerocash~\cite{sasson2014zerocash}, 
Zerocoin~\cite{miers2013zerocoin} and Quisquis~\cite{asiacrypt-2019-quisquis}, 
have been designed with tailored functionalities to provide privacy guarantees. 
Despite the robust privacy protections these cryptocurrencies provide,
private-by-design blockchains are typically restricted to transactions that
only involve asset transfers. Consequently, in such environments, the concept
of composability holds limited significance, as there are no other native
applications to interact with.
Also, each of these blockchains only supports a specific suite of cryptographic
primitives (e.g., Zcash is restricted to transaction authorization based on zero-knowledge proofs), making it even more difficult to interoperate with other applications.  
Like Zcash, Veksel~\cite{veksel2022} is a blockchain design that utilizes an optimized cryptographic suite, achieving larger anonymity sets and reduced transaction sizes. However, if deployed on Ethereum, its practical performance would be costly due to its reliance on RSA accumulators and slower zero-knowledge proof verification compared to Zcash. 

\pparagraph{Add-on Privacy Solutions: Off-chain Solutions} 
Another approach is to rely on an off-chain centralized server for tumbling, including
Tumblebit~\cite{heilman2017tumblebit}, A2L~\cite{tairi2021a2l}, A2L+~\cite{glaeser2022a2l2}, BlindHub~\cite{qin2023blindhub} and Accio~\cite{Ge2023accio}, which mainly utilize a centralized off-chain server for
mixing user funds. These designs offer a lower level of trust in the off-chain
server compared to alternatives like Mixcoin and Blindcoin, as they prevent the
server from misappropriating participant funds. Nevertheless, centralized
tumbling methods fall short in guaranteeing availability, as the central system
can block client deposits.

\pparagraph{Add-on Privacy Solutions: On-chain Solutions}
On the one hand, decentralized
tumbler approaches like Coinshuffle and Coinjoin have been designed as a peer-to-peer protocol overlay over existing cryptocurrencies. They aim to solve the availability issue with off-chain solutions by
enabling users to interact and create transactions that conceal the identities
of senders and recipients. However, ensuring the presence of participants and
their interaction poses challenges and could potentially expose privacy
vulnerabilities.
Additionally, the composability of these solutions presents obstacles, as funds
are typically restricted to the tumbling operation, limiting users'
ability to freely use those funds to interact with other on-chain applications.

On the other hand, there is also a line of work on add-on solutions for on-chain privacy developed on blockchains
that support smart contracts. They leverage the expressiveness of smart contracts to
implement a wide array of cryptographic primitives. Notably, Meiklejohn
\textit{et al.}~\cite{meiklejohn2018mobius} introduced an Ethereum-based mixing
service named
M\"{o}bius, utilizing linkable ring signatures and stealth address mechanisms,
similar to those in Monero, to conceal the identities of the sender and
recipient. However, the anonymity M\"{o}bius provides is capped by the ring's size,
with the withdrawal transaction's gas cost rising linearly with the ring size.

While more recent solutions have emerged, they face various limitations. 
Zapper~\cite{zapper2022}, which provides both data and identity privacy, requires a custom assembly language (Zasm) and cannot be directly deployed on Ethereum due to EVM incompatibility. Popular on-chain mixers like Railgun~\cite{railgun-protocol} 
can be viewed as specific instances of fixed-data tumbling, but they lack formal security analysis.

B\"{u}nz \textit{et al.}~\cite{zether-bunz-2020} developed Zether, a protocol for private payments on 
Ethereum, which conceals client balances using ElGamal encryption. The downside
is that Zether's transaction fees are high (7.8 million gas), and it still reveals
the sender and receiver of the transaction. To remedy this, Anonymous Zether~\cite{diamond2020many} 
was proposed by Diamond, enhancing privacy but still incurs high costs. Le \emph{et al.}~\cite{le2020amr} proposed AMR, an unlinkable fixed-amount payment. 
It can be integrated with existing DeFi protocols such as Aave~\cite{aave} or Compound~\cite{compound}, aiming to motivate user participation to expand the anonymity set. 
However, AMR only serves as an ad-hoc solution tailored for specific applications.

In contrast, our work provides a comprehensive approach: we formally define DTL with cryptographically proven security properties, extend it to handle arbitrary data types, and demonstrate how it can be modularly composed with different applications. This systematic treatment enables privacy-preserving smart contract development with formal security guarantees.

The key distinction is that while existing works offer tradeoffs between anonymity set size, availability, and application specificity, none provide generic \emph{composability}. Our work is the first to deliver this feature, demonstrating its utility across multiple applications.
\begin{inparaenum}
    \item unlinkable fixed-amount payments; 
    \item Unlinkable and confidential payments; 
    and, different from previous works, \item unlinkable weighted voting. 
\end{inparaenum}




\section{Preliminaries}
\label{sec:prelim}


\paragraph{Notation}
We denote the security parameter by $\sparam$ and a negligible function in $\sparam$ by 
$\mathtt{negl}(\lambda)$.
We let $[n]$ denote the set $\{1,\dots, n\}$.
We use $(x_i)_{i=1}^{n}$ to denote a list of $n$ elements, $(x_1, \dots, x_n)$.
We let PPT denote probabilistic polynomial time.
We define the security properties of our protocol as games written in pseudocode.
These games invoke a probabilistic-polynomial time (PPT) adversary, \adv, with
access to some oracles.

{

\pparagraph{\zksnark}
A zero-knowledge Succinct Non-interactive ARgument of Knowledge
(\zksnark) is a ``succinct'' non-interactive zero-knowledge proofs (NIZK) for arithmetic circuit satisfiability. 
For a field $\FF$, an arithmetic circuit $C$ takes
as inputs elements in $\FF$ and outputs elements in $\FF$. 
We adopt a similar definition from Zerocash~\cite{sasson2014zerocash}
to define the arithmetic circuit satisfiability problem.
An arithmetic circuit satisfiability problem of a circuit ${C}:\FF^n\times\FF^h\rightarrow \FF^l$
is captured by the relation $R_C=\set{(\statement,\witness)\in \FF^n\times \FF^h: C(x,\witness) = 0^l}$, with 
the language $\mathcal L_C = \set{\statement \in \FF^n~|~\exists~\witness \in \FF^h~s.t~C(\statement,\witness) = 0^l}$. 



\begin{definition}[\zksnark~\cite{groth2016size}]
\label{def:zksnarks}
\zksnark, $\Pi$, for arithmetic circuit satisfiability is a triple of 
algorithms $(\mathtt{Setup}, \mathtt{Prove}, \mathtt{Verify})$: 
\begin{itemize}[leftmargin=0.5cm]
    \item $(\pk, \vk) \leftarrow \fun{Setup}(\sparam, C)$ 
    takes as input the security parameter and the arithmetic circuit $C$, outputs
    a proving key $\pk$, and a verification key $\vk$. The public parameters, $\pp$, is given implicitly
    to both proving and verifying algorithms.
    \item $\pi\leftarrow \fun{Prove}(\pk,\statement,\witness)$ takes as input the
    evaluation key $\ek$ and $(x, \witness) \in R_C$, outputs a proof 
    $\pi$ for the statement $x \in \mathcal{L}_C$
    \item $0/1\leftarrow\fun{Verify}(\vk, \statement, \pi)$ takes as input the verification
    key $\vk$, the public input $x$, the proof $\pi$,  outputs $1$ if $\pi$ is valid proof
    for  $x \in \mathcal{L}_C$. 
\end{itemize}
\end{definition}

Apart from \emph{correctness, soundness,} and \emph{zero-knowledge} properties, 
a \zksnark requires two additional properties, \textit{succinctness} and 
{\textit{simulation extractability}}.


\pparagraph{Commitment Scheme} 
A commitment scheme contains two operations: committing and revealing. During the committing operation, a user commits to selected values while concealing them from others. The user can choose to reveal the committed value during the revealing operation.  
\begin{definition}[Commitment Scheme]
\label{def:commitscheme}
  A commitment scheme, $\commit$ defined over $(\mathcal{X}, \mathcal{Y})$,
  consists of two algorithms: 
 \begin{itemize}[leftmargin=0.5cm]
      \item $\pp_{\mathsf{commit}} \leftarrow \fun{Setup}(\sparam)$ takes as
          input the security parameter, $\sparam$, it outputs the public
          parameters are implicit input to all subsequent algorithms.
      \item $\cm \leftarrow \fun{Commit}(m; r)$ accepts a message $m$ and a
          secret randomness $r$ as inputs and returns the commitment string
          $\cm$. 
      \item $0/1 \leftarrow \fun{Verify}(m, \cm; r)$ accepts a message $m$, a
          commitment $\cm$ and a value $r$ as inputs, and returns $1$ if the
          commitment is opened correctly and $0$ otherwise.
  \end{itemize}
\end{definition}
\noindent A commitment scheme should satisfy two security requirements:
(i) \emph{Binding:} Except for a negligible probability, no
probabilistic-polynomial time adversary can
efficiently produce $\cm$, $(m_1, r_1)$ and $(m_2, r_2)$ such that
$\fun{Verify}(m_1, \cm; r_1) = \fun{Verify}(m_2, \cm; r_2) = 1$ and $m_1 \neq
m_2$,
(ii) \emph{Hiding:} Except for a negligible probability, $\cm$ does not reveal
anything about the committed data.
{We can instantiate $\com$ with a hash function $h$ in the random oracle model.}


\pparagraph{Merkle Tree}~\label{pp:merkle-tree}
In this work, we focus solely on the Merkle tree, which is an authenticated data structure used for set membership.
A Merkle tree leverages a collision-resistant hash function to construct
the data structure. A Merkle tree consists of four algorithms that work as
follows:

\begin{definition}[Merkle Tree]
\label{def:ads}
A Merkle tree, $\mathsf{MT}$ defined over $(\mathcal{X}^n, \mathcal{Y})$, consists of the following algorithms: 
\begin{itemize}[leftmargin=0.5cm]
     \item $\rt\leftarrow\initialize(\sparam, \seqx)$ takes the security parameter and
         a list $\seqx$ as inputs, where $x_i \in \mathcal{X}$, $n$ is a power of $2$, and
         the algorithm outputs $\rt \in \mathcal{Y}$.

     \item $\adsproof \leftarrow \prove(j, x, \seqx)$ takes an element $x$, an
         index,  $j\in [n]$, and a sequence $\seqx$ as inputs, and outputs the
         proof $\pi$, which can prove that $x$ is the $j^{th}$ element in the
         sequence $\seqx$. 
    
    \item $0/1\leftarrow \mkverify(j, x, \rt, \adsproof)$ takes an index $j \in [n]$,
        an element $x$, a root, $\rt\in \bset^\lambda$ and a proof $\adsproof$ as
        inputs. The algorithm outputs $1$ if $x$ is the $j^{th}$ element in
        \seqx and $0$ otherwise.

  \end{itemize}

\end{definition}


   A Merkle tree should satisfy \emph{correctness} and \emph{security}. 
   For the formal
   definitions of these properties, we refer to the cryptography introduction
   book of Boneh and Shoup~\cite{boneh2020graduate}. 
   
\pparagraph{Tagging Scheme} We use the definition of the tagging
scheme from~\cite{lai2019omniring}. 
However, our tagging scheme does not
require the homomorphic property as defined in their protocol. Hence, we
modified the security experiments to not use the related input oracles. 
\begin{definition}[Tagging Scheme~\cite{lai2019omniring}]
\label{def:tagging-scheme}
A tagging scheme, $\mathsf{TAG}$ defined over $(\mathcal{K}, \mathcal{P}, \mathcal{T})$, is a
triple of algorithms, where: 
\begin{itemize}[leftmargin=0.5cm]
    \item $\tagpp\gets \tagsetup(\sparam)$: On input the
        security parameter $\sparam$, it outputs the public parameter,
        $\tagpp$. The public parameters are implicitly input to all
        subsequent algorithms. 
    \item $\dcmcoinpk \leftarrow \tagkeygen(\dcmcoinsk)$: On input the key
        $\dcmcoinsk$ from the secret key space~$\mathcal{K}$, it \emph{deterministically} outputs a public key
        $\dcmcoinpk \in \mathcal{P}$.
    \item $\key{tag} \leftarrow \tageval(\dcmcoinsk)$: On input the secret
        key $\dcmcoinsk$, it \emph{deterministically} outputs the tag  $\key{tag} \in \mathcal{T}$. 
\end{itemize}
\end{definition}
The tagging scheme should satisfy \emph{one-wayness}, \emph{collision-resistance}, and \emph{pseudo-randomness}. 

\pparagraph{Encryption Scheme} For arbitrary data construction, we need to use encryption scheme defined as follows. 
\begin{definition}
A public-key encryption scheme,
$\mathtt{E}$, defined over $(\mathcal{M}, \mathcal{C})$, works as follows: 
\begin{itemize}[leftmargin=0.5cm]
    \item  $(\dk, \ek) \leftarrow \fun{KGen}(\sparam)$: On input the security
        parameter $1^\lambda$, the algorithm outputs a pair of decryption
        and encryption keys, $(\dk, \ek)$. 
    \item  $c \leftarrow \fun{Enc}(\ek, m; r)$: On input an encryption key, \ek,
    a message, $m \in \mathcal{M}$, and randomness $r$, the algorithm outputs a ciphertext, $c \in \mathcal{C}$. 
    \item  $m \leftarrow \fun{Dec}(\dk, c)$: On input a decryption key, $\dk$
    and a ciphertext $c \in \mathcal{C}$, the algorithm outputs a message, $m\in \mathcal{M}$.
\end{itemize}
\end{definition}
The encryption scheme should be correct and satisfy ciphertext
indistinguishability under chosen-plaintext attacks (IND-CPA). For certain applications, 
we require the encryption scheme to be \emph{additively homomorphic}. 
We assume there is a deterministic function $\fun{derive}$, which takes as input
the decryption key and outputs the encryption key, i.e. $\ek = \fun{derive}(\dk)$.

\section{Data Tumbling Layer}
\label{sec:problem-statement}

In this section, we first define the notion of the Data Tumbling Layer. Then we introduce the security and privacy notions of interest. 

\begin{figure*}[!h]
    \centering
    \begin{pcvstack}[boxed]
    \begin{pchstack}
    \begin{pcvstack}
    \procedure[space=keep, bodylinesep=3pt, codesize=\footnotesize]{$\dcmcreatecoin\oracle({\dcmdata}) $}{
        (\dcmcoinpk, \dcmcoinsk) \gets \dcmcreatecoin({\dcmdata})\\
        \dcmcoinlist \define \dcmcoinlist \cup \set{(\dcmcoinpk, \dcmcoinsk)}\\
        \pcreturn \dcmcoinpk
    }
    \pcvspace
    \procedure[space=keep, bodylinesep=3pt, codesize=\footnotesize]{$\dcmexpds_{\adv}(\sparam) $}{
    \pp \gets \dcmsetup(\sparam),
    \dcmcoinlist \define \emptyset, \dcmredeemlist \define \emptyset\\
    \{(\nulli_j, \dcmproof_j, m_j)\}_{j\in [n+1]}, \dcmncoins \gets \adv(\pp)\\
    \dcmcoinaccstate \leftarrow \dcmaccumulatecoin(\dcmncoins)\\
    \bit_0 \define \wedge_{j\in [n+1]}\dcmverify(\dcmcoinaccstate, \nulli_j, \dcmproof_j, m_j)\\
    \bit_1\define \forall i \neq j \in [n+1], \nulli_i \neq \nulli_j\\
    \pcreturn \bit_0 \land \bit_1 
    }
    \pcvspace
    \procedure[space=keep, bodylinesep=3pt, codesize=\footnotesize]{$\dcmexpunlink_{\adv}(\sparam) $}{
        \pp \gets \dcmsetup(\sparam),
        \dcmcoinlist \define \emptyset, \dcmredeemlist \define \emptyset\\
        m \sample \mathcal{M}\\
        \dcmdata_0, \dcmdata_1 \leftarrow \adv(\pp, m)\\
        ({\dcmcoinpk_0}, {\dcmcoinsk_0}) \gets \dcmcreatecoin(\dcmdata_0)\\
        ({\dcmcoinpk_1}, {\dcmcoinsk_1}) \gets \dcmcreatecoin(\dcmdata_1) \\
        (\nulli_0, \dcmproof_0) \gets \dcmredeemcoin((\dcmcoinpk_0, \dcmcoinpk_1), {\dcmcoinsk_0}, m)\\
        (\nulli_1, \dcmproof_1) \gets \dcmredeemcoin((\dcmcoinpk_0, \dcmcoinpk_1), {\dcmcoinsk_1}, m)\\
        \bit \sample \set{0,1}\\
        \bit' \gets \adv^{\dcmcreatecoin\oracle, \dcmredeemcoin\oracle}(\dcmcoinpk_0, \dcmcoinpk_1, (\nulli_{0 \oplus \bit}, \dcmproof_{0 \oplus \bit}), (\nulli_{1\oplus \bit}, \dcmproof_{1 \oplus \bit}))\\
        \bit_0 \define \bit = \bit' \\
        \bit_1 \define (\cdot, {\dcmcoinpk_0}, \cdot, \cdot, \cdot) \notin \dcmredeemlist \land (\cdot, {\dcmcoinpk_1},\cdot, \cdot, \cdot) \notin \dcmredeemlist\\
        \pcreturn \bit_0 \land \bit_1
        }
    \end{pcvstack}
    \pchspace
    \begin{pcvstack}
    \procedure[space=keep, bodylinesep=3pt, codesize=\footnotesize]{$\dcmredeemcoin\oracle(\dcmncoins, i, m) $}{
        \pcif \nexists (\dcmcoinpk^*, \dcmcoinsk^*) \in \dcmcoinlist | \dcmcoinpk^* = \dcmcoinpk_i~\pcreturn \bot \\
        \textbf{let}~ (\dcmcoinpk^*, \dcmcoinsk^*) \in \dcmcoinlist | \dcmcoinpk^* = \dcmcoinpk_i \\
        \dcmcoinaccstate 
        \gets \dcmaccumulatecoin(\dcmncoins)\\
        (\nulli_i, \dcmproof_i) \gets \dcmredeemcoin(\dcmncoins, \dcmcoinsk^*, m)\\
        {\dcmredeemlist \define \dcmredeemlist \cup \set{(\dcmcoinaccstate, \dcmcoinpk^*, \nulli_i, \dcmproof_i, m)}}\\
        \pcreturn (\nulli_i, \dcmproof_i) 
    }
    \pcvspace
    \procedure[space=keep, bodylinesep=3pt,codesize=\footnotesize]{$\dcmexptheft_{\adv}(\sparam) $}{
    \pp \gets \dcmsetup(\sparam),
    \dcmcoinlist \define \emptyset, \dcmredeemlist \define \emptyset\\
    (\dcmncoins, \nulli, \dcmproof, m) \gets \adv^{\dcmcreatecoin\oracle, \dcmredeemcoin\oracle}(\pp)\\
    \dcmcoinaccstate \leftarrow \dcmaccumulatecoin(\dcmncoins)\\
    \bit_0 \define \dcmverify(\dcmcoinaccstate, \nulli, \dcmproof, m)\\
    \bit_1 \define  \forall \dcmcoinpk \in \dcmncoins, (\dcmcoinpk, \cdot) \in C\\
    \bit_2 \define (\dcmcoinaccstate, \cdot, \nulli, \cdot, m) \notin \dcmredeemlist\\ 
    \pcreturn \bit_0 \land \bit_1  \land \bit_2
    }
    \pcvspace
    \procedure[space=keep, bodylinesep=3pt, codesize=\footnotesize]{$\dcmnonslander_{\adv}(\sparam) $}{
        \pp \gets \dcmsetup(\sparam),
        \dcmcoinlist \define \emptyset, \dcmredeemlist \define \emptyset\\
        (\dcmcoinpk, \dcmcoinpk^*), (\nulli^*, \pi^*, m^*) \gets \adv^{\dcmcreatecoin\oracle, \dcmredeemcoin\oracle}(\pp)\\
        \dcmcoinaccstate \leftarrow \dcmaccumulatecoin((\dcmcoinpk, \dcmcoinpk^*))\\
        \bit_0 \define (\dcmcoinpk, \cdot)\in C\\ 
        m \sample \mathcal{M}\\
        (\nulli, \pi)\leftarrow\dcmredeemcoin\oracle((\dcmcoinpk, \dcmcoinpk^*), 1, m)\\
        \bit_1 \define \nulli = \nulli^*\\
        \bit_2 \define \dcmverify(\dcmcoinaccstate, \nulli^*, \dcmproof^*, m^*) \\
        \bit_3 \define (\dcmcoinaccstate, \cdot, \nulli, \cdot, m^*) \notin \dcmredeemlist\\ 
        \pcreturn \bit_0 \land \bit_1 \land \bit_2 \land \bit_3
    }
    \end{pcvstack}
    \end{pchstack}
    \end{pcvstack}
    \caption{Definition of oracles and various experiments.}
    \label{fig:exps}
\end{figure*}

\begin{definition}[Data Tumbling Layer (DTL)]
\label{def:dcm}
A data tumbling layer (\DDM) is a tuple of algorithms (\dcmsetup,
\dcmcreatecoin, \dcmaccumulatecoin, \dcmredeemcoin, \dcmverify) as
follows:

\begin{itemize}[leftmargin=0.5cm]
    \item $ \pp \gets \dcmsetup(\sparam) $: On input the security parameter
       $\sparam$, it outputs the public parameters $\pp$. The
       public parameters $\pp$ are implicitly input to all subsequent algorithms.
    
    \item $ (\dcmcoinpk, \dcmcoinsk) \gets \dcmcreatecoin({\dcmdata})$: On
        input the public parameters $\pp$ and the  data, $\dcmdata \in \mathcal{D}$,
        outputs the public ($\dcmcoinpk$) and private ($\dcmcoinsk$) key
        representations of {$\dcmdata$}.  
    
    \item $ \dcmcoinaccstate \gets
       \dcmaccumulatecoin(\dcmncoins)$: On input a sequence of public keys
       $\dcmncoins$, it outputs a state $\dcmcoinaccstate$. 

    \item $\set{(\nulli,\dcmproof), \bot} \gets \dcmredeemcoin(\dcmncoins,
        \dcmcoinsk, m)$: On input a list of public keys $\dcmncoins$, {an
        application-dependent message $m \in \mathcal{M}$}, and the private key of a piece of data 
        $\dcmcoinsk$, it outputs a pair of tag, proof, $(\nulli, \dcmproof)$, or $\bot$.
    
    \item $\bit \gets \dcmverify(\dcmcoinaccstate, \nulli, \pi, m)$: On input
        a state $\dcmcoinaccstate$, a tag, $\nulli$, a proof, $\dcmproof$, and a message
        $m$, it outputs a bit $\bit$.

\end{itemize}
We say that a DTL is \emph{correct} if for all $\lambda \in \NN$, all $n\in
\poly(\lambda)$ and all $i$, s.t. $1 \leq i \leq n$, any {$\dcmdata_i \in \mathcal{D}, m \in
\mathcal{M}$} it holds that:
{\footnotesize 
\begin{equation*}
    \begin{aligned}
    \Pr \left[
         \begin{aligned}
          &\pp \gets \dcmsetup(\sparam), 
         \\& (\dcmcoinpk_i, \dcmcoinsk_i) \gets \dcmcreatecoin(\dcmdata_i) \text{ for } \forall i \in [n],
         \\& \dcmcoinaccstate \gets \dcmaccumulatecoin(\dcmncoins) , 
         \\& (\nulli_i, \dcmproof_i) \gets \dcmredeemcoin(\dcmncoins, \dcmcoinsk_{i}, m),
         \\& \text{ s.t. }
         \forall i \in [n], \dcmverify(\dcmcoinaccstate, \nulli_i, \dcmproof_i, m) = 1  
         \end{aligned}
        \right] = 1
    \end{aligned}
\end{equation*}
}
\end{definition}


Next, we discuss the notions of interest for a DTL scheme. 

\pparagraph{No One-More Redeeming} 
We require a notion called \emph{no one-more redeeming} for a DTL scheme. 
Intuitively, this property guarantees that no user (including the adversary)
can redeem a public key \dcmcoinpk more than once, even if he knows
the corresponding \dcmcoinsk. This property thereby captures the idea that an
adversary cannot use a DTL scheme to redeem more than the number of keys that it controls. 
We formally describe this property in~\cref{def:dcm-double-spending}. 

\begin{definition}[DTL Security: No One-more Redeeming]
\label{def:dcm-double-spending}
    We say that a DTL scheme is \emph{secure against one-more redeeming} if for all $\lambda \in \NN$ there exists a negligible function $\negl(\lambda)$ such that 
$
\Pr\left[ \dcmexpds_{\adv}(\sparam) \right]\leq \negl(\lambda)
$, 
where the experiment $\dcmexpds$ is defined in~\cref{fig:exps}. 
\end{definition}

\pparagraph{Theft Prevention} We require another security notion called \emph{theft prevention} for a DTL scheme. 
This property guarantees that it is infeasible for an adversary to successfully redeem a public key \dcmcoinpk for which they do not know the corresponding private key \dcmcoinsk. 
This property thereby captures the idea that an adversary should not be able to steal a public key from an honest user. 
We formally describe this property in~\cref{def:dcm-theft}. 

\begin{definition}[DTL Security: Theft prevention]
\label{def:dcm-theft}
    We say that a DTL scheme is \emph{secure against theft} if for all $\lambda \in \NN$ there exists a negligible function $\negl(\lambda)$ such that 
$
\Pr\left[ \dcmexptheft_{\adv}(\sparam) \right] \leq \negl(\lambda)
$, 
where $\dcmexptheft$ is defined in~\cref{fig:exps}. 

\end{definition}

\pparagraph{Non-slanderability} Another security notion required in a DTL is \emph{non-slanderability}. 
This property guarantees that an adversary cannot prevent an honest user from successfully redeeming their previously committed data. 
This property thereby captures the idea that the adversary cannot launch a denial of service attack against an honest user. 
Note that non-slanderability differs from theft prevention in that in the former, the adversary is required to successfully redeem the victim's public key $\dcmcoinpk$ whereas, in the latter, the adversary does not necessarily redeem the victim's \dcmcoinpk and yet can prevent the victim from redeeming their $\dcmcoinpk$.  
We formalize this notion in~\cref{def:dcm-nonslanderability}. 

\begin{definition}[DTL Security: Non-slanderability]
\label{def:dcm-nonslanderability}
    We say that a DTL scheme is \emph{non-slanderable} if for all $\lambda
    \in \NN$ there exists a negligible function $\negl(\lambda)$ such that 
$
\Pr\left[ \dcmnonslander_{\adv}(\sparam) \right] \leq \negl(\lambda)
$, 
where $\dcmnonslander$ is defined in~\cref{fig:exps}. 
\end{definition}

\pparagraph{Unlinkability} The privacy notion required by a DTL is \emph{unlinkability}. 
Intuitively, this property guarantees that an adversary cannot link a \dcmredeemcoin operation from an honest user to the public key \dcmcoinpk being redeemed. 
We formally capture this notion in~\cref{def:dcm-unlinkability}. 

\begin{definition}[DTL Privacy: Unlinkability]
\label{def:dcm-unlinkability}
We say that a DTL scheme provides \emph{unlinkability} if for all $\lambda \in
\NN$ there exists a negligible function $\negl(\lambda)$ such that 
$
    \left|\Pr\left[\dcmexpunlink_\adv(\sparam)\right]-1/2\right| \leq \negl(\lambda)
$, 
where $\dcmexpunlink$ is defined in~\cref{fig:exps}. 
\end{definition}
\begin{figure*}[t]
    \centering
    \begin{pchstack}[boxed]
        \begin{pcvstack}
            \procedure[space=keep, bodylinesep=3pt, codesize=\footnotesize]{$\dcmsetup(\sparam)$}{
                \text{Let: } \\
                \t \Pi \text{ be a \zksnark scheme,} \\
                \t \tree \text{ be a Merkle tree instance,} \\
                \t \tagscheme \text{ be a tagging scheme,} \\
                \t \dcmdata_{\mathsf{fixed}} \text{ be the default data}\\
                \tagpp \gets \tagscheme.\tagsetup(\sparam)\\
                (\pk, \vk) \gets \Pi.\setup(\sparam,  C_{\mathsf{fixed}})\\
                \pp \define (\pk, \vk, \tagpp, \dcmdata_{\mathsf{fixed}})\\
                \pcreturn \pp
            }
        \end{pcvstack}
        \pchspace
        \begin{pcvstack}
            \procedure[space=keep, bodylinesep=3pt, codesize=\footnotesize]{$\dcmaccumulatecoin(\{{\dcmcoinpk_i}\}_{i\in [n]})$}{
                \dcmcoinaccstate \gets \tree.\initialize(\sparam, {\dcmncoins})\\
                \pcreturn \dcmcoinaccstate
            }
            \pcvspace
            \procedure[space=keep, bodylinesep=3pt, codesize=\footnotesize]{$\dcmverify({\dcmcoinaccstate}, \nulli, \pi, m$)}{
                (\zkproof, \dcmdata) \leftarrow \pi \\
                \statement \define ({\dcmcoinaccstate}, \nulli, m)\\
                \pcreturn \zkverify(\vk, \statement, \zkproof)
            }
            \pcvspace
            \procedure[space=keep, bodylinesep=3pt, codesize=\footnotesize]{$\dcmcreatecoin()$}{
                k \sample \bset^\lambda, r \sample \bset^\lambda\\
                \dcmcoinsk \define (k, r)\\
                \dcmcoinpk \gets \tagscheme.\tagkeygen(\dcmcoinsk)\\
                \pcreturn (\dcmcoinpk, \dcmcoinsk)
            }
        \end{pcvstack}
        \pchspace
        \begin{pcvstack}
             \procedure[space=keep, bodylinesep=3pt, codesize=\footnotesize]{$\dcmredeemcoin(\{{\dcmcoinpk_i}\}_{i\in [n]}, {\dcmcoinsk}, m$)}{
                    \dcmcoinpk \leftarrow \tagscheme.\tagkeygen(\dcmcoinsk)\\
                    \pcif \dcmcoinpk \notin \{{\dcmcoinpk_i}\}_{i\in [n]}~\pcreturn \bot\\
                    \textbf{let}~ j \in 1,\ldots,n : {\dcmcoinpk_j} = \dcmcoinpk\\
                    \dcmcoinaccstate \gets \tree.\initialize(\sparam, {\dcmncoins})\\
                    \adsproof \gets \tree.\prove(j, \dcmcoinpk, \dcmncoins)\\
                    \nulli \gets \tagscheme.\tageval(\dcmcoinsk) \\
                    \statement \define (\dcmcoinaccstate, \nulli, m)\\
                    \witness \define (j, \dcmcoinsk, \adsproof)\\
                    \zkproof \gets \zkprove(\pk, \statement, \witness)\\
                    \pi \define \zkproof\\
                    \pcreturn (\nulli, \pi)
            }
        \end{pcvstack}
    \end{pchstack}
    \caption{Our construction of \fDDM~for fixed data.\label{fig:construction-fixed-amt}}
\end{figure*}

\section{DTL Constructions}
\label{sec:DDM-construction}
In this section, we provide two constructions for DTL. One is for a fixed data
input scenario where input data is identical, and the other is for
arbitrary data where the input data can be arbitrary.

\subsection{$\DDM_\fixed$: DTL Construction for Fixed Data}
\label{sec:dcm-construction-simple}

For certain applications, like fixed-amount mixers or equal-weight voting systems, all users will have identical data (for instance, a set amount of money or voting eligibility). 
The primary aim here is to disassociate the data from its original depositor. Realizing a DTL with identical data is fairly straightforward. 

The detailed construction of $\fDDM$ is in \cref{fig:construction-fixed-amt}. Below, we explain in detail each segment of this construction.


\pparagraph{Parameter Setup}
To construct $\DDM_\fixed$, one needs to instantiate a secure \zksnark scheme 
$\Pi$, a Merkle tree \tree, 
a secure tagging scheme \tagscheme, and a predefined data $\dcmdata_\fixed$. 
{We run $\tagscheme$ setup algorithm to generate the public parameter for the
tagging scheme, $\pp_{\tagscheme}$. 
{Additionally, a one-time setup is required
for the \emph{statement} asserting the ownership of a secret key 
that corresponds to one of the public keys that have been accumulated into
the Merkle tree.}

Specifically, we have: 
$\statement \define (\dcmcoinaccstate, \nulli, m)$ and witnesses $\witness
\define (i, \dcmcoinsk, \adsproof)$, with the relation defined as follows:
{\footnotesize 
\begin{equation}~\label{eq:fixed}
        R_{C_\mathsf{fixed}} \define \left\{
        \begin{aligned}
        &(\statement\define (\dcmcoinaccstate, \nulli, m); \witness\define (i, \dcmcoinsk, \adsproof)): 
        \\& \tree.\mkverify(i, \dcmcoinpk, \dcmcoinaccstate, \adsproof)  \land  
        \\& \dcmcoinpk = \tagscheme.\tagkeygen(\dcmcoinsk) \land
        \\& \nulli = \tagscheme.\tageval(\dcmcoinsk)
        \end{aligned}
                                   \right\}
\end{equation}
}
Finally, the algorithm outputs the public parameters, including the zero-knowledge proof key pair and the public parameter for the tagging scheme. 

\pparagraph{Creating Key Pair} This algorithm 
uniformly samples two random elements, $k, r$, from $\bset^{\lambda}$. 
It returns the secret key, $\dcmcoinsk = (k,r)$, and the public
key, $\dcmcoinpk \define \tagscheme.\tageval(\dcmcoinsk)$.

\pparagraph{Accumulating Keys} This algorithm 
takes a list of public keys as input and produces a succinct representation of them. The algorithm executes the $\tree.\initialize()$ function to accumulate those public keys
into a Merkle root, denoted as $\dcmcoinaccstate$ and subsequently outputs $\dcmcoinaccstate$.

\pparagraph{Redeeming Data} The primary objective of this algorithm is to generate a cryptographic proof that confirms a user's control over one of the keys from the provided list of public keys. This is coupled with linking a specific message, 
$m$, to the proof. 

More precisely,
the algorithm takes a list of public keys, 
$\dcmncoins$, a secret key, 
\dcmcoinsk, and an arbitrary message 
$m$ as its inputs.
If the public key corresponding to $\dcmcoinsk$ is not found within 
$\dcmncoins$, the algorithm returns $\bot$.
However, if a match is found, the algorithm triggers the 
\zksnark instance to generate the proof 
$\pi$. This proof verifies the statement detailed in~\cref{eq:fixed}.

Notably, a static message $m$ is also incorporated into the public statement. 
{
For example, in applications like unlinkable payment, this message may consist
of the address of the recipient and the fee paid for the tumbling contract.
}
This message cannot be altered by any adversary if the \zksnark instance has the simulation extractability property. 

\pparagraph{Verifying Redemption}
The algorithm runs
$\zkverify$ to verify the validity of the statement defined \cref{eq:fixed}.

\subsection{\aDDM: DTL Construction for Arbitrary Data}
\label{sub:ddm-arbitrary}

{
In this construction, it is crucial to encrypt the data when it is redeemed. If data inputs vary and are not encrypted during redemption, it could easily allow a public key to be linked to its proof, undermining the desired unlinkability. As such, we must leverage the commitment scheme's hiding property during key creation to obscure the data and implement an encryption scheme when the data is redeemed.
Furthermore, to facilitate the encryption process, the input message $m$ of the redemption function must include the public
encryption key, \ek, as its first element.
}

\begin{figure*}[h]
    \centering
    \begin{pchstack}[boxed]
        \begin{pcvstack}
            \procedure[space=keep, bodylinesep=3pt, codesize=\footnotesize]{$\dcmsetup(\sparam$)}{
                \text{Let: } \\
                \t \Pi \text{ be a zksnark scheme,} \\
                \t \tree \text{ be a Merkle tree instance,} \\
                \t \tagscheme \text{ be a tagging scheme} \\
                \t \mathtt{E} \text{ be an  encryption scheme }\\
                \t \commit \text{ be a commitment scheme}\\
                \tagpp \gets \tagscheme.\tagsetup(\sparam)\\
                (\pk, \vk) \gets \Pi.\setup(\sparam, C_{\mathsf{arb}})\\
                \pp \define (\pk, \vk, \tagpp)\\
                \pcreturn \pp
            }

        \end{pcvstack}

        \pchspace

        \begin{pcvstack}
            \procedure[space=keep, bodylinesep=3pt, codesize=\footnotesize]{$\dcmaccumulatecoin({\dcmncoins}$)}{
                \dcmcoinaccstate \gets \tree.\initialize(\sparam, {\dcmncoins})\\
                \pcreturn \dcmcoinaccstate
            }
            
            \pcvspace
            \procedure[space=keep, bodylinesep=3pt, codesize=\footnotesize]{\dcmverify(${\dcmcoinaccstate}, \nulli, \dcmproof, m$)}{
                \ek\leftarrow m\\
                (\zkproof, c) \leftarrow \dcmproof \\
                \statement \define ({\dcmcoinaccstate}, \nulli, \ek, c)\\
                \pcreturn \zkverify(\vk, \statement, \zkproof)
            }
            \pcvspace
            \procedure[space=keep, bodylinesep=3pt, codesize=\footnotesize]{$\dcmcreatecoin(\dcmdata)$}{
                k \sample \bset^\lambda, r \sample \bset^\lambda\\
                \dcmcoinsk' \define (k, r)\\
                \dcmcoinpk' \gets \tagscheme.\tagkeygen(\dcmcoinsk')\\
                \dcmcoinsk = (k,r,\dcmdata)\\
                \dcmcoinpk = \commit.\commitcreate(\dcmdata, \dcmcoinpk')\\
                \pcreturn (\dcmcoinpk, \dcmcoinsk)
            }
        \end{pcvstack}
    \pchspace
    \begin{pcvstack}
            \procedure[space=keep, bodylinesep=3pt, codesize=\footnotesize]{$\dcmredeemcoin({\dcmncoins}, {\dcmcoinsk}, m$)}{
                (k, r, \dcmdata) \gets {\dcmcoinsk}\\
                \dcmcoinsk' \define (k, r)\\
                \dcmcoinpk' \gets \tagscheme.\tagkeygen({\dcmcoinsk'})\\
                \dcmcoinpk \gets \commit.\commitcreate(\dcmdata; \dcmcoinpk')\\
                \pcif \dcmcoinpk \notin {\dcmncoins}~\pcreturn \bot\\
                \textbf{let}~ j \in \{1,\ldots,n\}: {\dcmcoinpk_j} = \dcmcoinpk\\
                \dcmcoinaccstate \gets \tree.\initialize(\sparam, {\dcmncoins})\\
                \adsproof \gets \tree.\prove(j, \dcmcoinpk, {\dcmncoins})\\
                \nulli \gets \tagscheme.\tageval(\dcmcoinsk')\\
                m = (\ek, \dots)  \\ 
                r_{\mathsf{enc}} \sample \bset^\lambda\\
                c \define \mathtt{E}.\fun{Enc}(\ek, \dcmdata; r_{\mathsf{enc}})\\
                \statement \define (\dcmcoinaccstate, \nulli, m, c)\\
                \witness \define (k, r, \dcmdata, \adsproof, r_{\mathsf{enc}})\\
                \zkproof \gets \zkprove(\pk, \statement, \witness)\\
                \dcmproof \define (\zkproof, c)\\
                \pcreturn (\nulli, \dcmproof)
            }
    \end{pcvstack}
    \end{pchstack}
\caption{Our construction of \aDDM~for arbitrary data.}
\label{fig:construction-arbitrary-amt}
\end{figure*}

\pparagraph{Setting Up Parameters} 
Similar to \fDDM, one needs to instantiate a secure \zksnark scheme, $\Pi$, a Merkle tree, \tree, a secure tagging scheme, \tagscheme. 
Additionally, for $\aDDM$,  an IND-CPA secure encryption scheme, $\mathtt{E}$
and a \emph{hiding} and \emph{binding} commitment scheme $\commit$ are utilized.

Furthermore, a one-time setup is required for the statement.
This statement verifies that an individual has control over a secret key
corresponding to a public key integrated into the Merkle tree root. It also
ensures that the data used to compute the public key is identical to the data
being encrypted. Finally, we may need to prove that the data satisfies some predicate $P(\cdot)$  (e.g., a range proof).

To be more specific, we utilize \zksnark to address the subsequent hard relation:
for statement $\statement \define (\dcmcoinaccstate, \nulli, \ek, c)$ and
witnesses $\witness \define (\dcmcoinsk, \adsproof, r_{\mathsf{enc}})$ where
$\dcmcoinsk \define (k,r,\dcmdata)$.

{\footnotesize 
\begin{equation}~\label{eq:arbit}
        R_{C_\mathsf{arb}} \define \left\{
        \begin{aligned}
        &(\statement\define (\dcmcoinaccstate, \nulli, \ek, c) ; \witness \define (\dcmcoinsk, \adsproof, r_{\mathsf{enc}})): 
        \\& \tree.\mkverify(i, \dcmcoinpk, \dcmcoinaccstate, \adsproof)  \land
                                      \\& \nulli = \tagscheme.\tageval(\dcmcoinsk) \land 
                                      \\& \dcmcoinpk' = \tagscheme.\tagkeygen(\dcmcoinsk) \land 
                                      \\&\dcmcoinpk = \commit.\commitcreate(\dcmdata; \dcmcoinpk') \land 
                                      \\& c = \mathtt{E}.\mathsf{Enc}(\ek, \dcmdata; r_{\mathsf{enc}}) \land {P(\dcmdata) = 1}
        \end{aligned}
        \right\}
\end{equation}
}

Similar to \fDDM, we run $\tagscheme$ setup to generate the public parameter,
$\pp_{\tagscheme}$. Finally, the algorithm outputs as the public parameters
the zero-knowledge proof key pair (i.e., evaluation key and
verification key) and the public parameter for the tagging scheme. 

\pparagraph{Creating Key Pair} This algorithm takes \dcmdata as input and
outputs a pair of secret key and public key such that \dcmdata is binding to the public key. 
To accomplish this, the algorithm samples two elements $k,r$ from $\bset^{\lambda}$. It then executes $\tagscheme.\tageval$ using $(k,r)$ to obtain $\dcmcoinpk'$ and uses this key as randomness to compute the public key $\dcmcoinpk = \commit.\commitcreate(\dcmdata, \dcmcoinpk')$. Finally, the algorithm outputs, $\dcmcoinsk = (k,r, \dcmdata)$ and $\dcmcoinpk$.

\pparagraph{Accumulating Keys} This algorithm is identical to \fDDM.

\pparagraph{Redeeming Encrypted Data} 
Unlike $\fDDM$, the redeem proof now
includes the encryption of \dcmdata associated with the public key. This
algorithm demonstrates that the user has control over one of the keys, and that
the data linked to the public key is indeed encrypted within the ciphertext. To
achieve this, the message must contain the encryption key, $\ek$. During the
redeeming step, the algorithm encrypts the data associated with the encryption
key, then utilizes the \zksnark instance to verify that the ciphertext is an
encryption of the data linked to the public key.
Finally, the algorithm outputs the tag and proof pair consisting of the tag,
\nulli, and the proof $\pi$ which consists of the \zksnark proof, \zkproof, that verifies~\cref{eq:arbit} and the
ciphertext $c = \mathtt{E.Enc}(\ek, \dcmdata)$.

\pparagraph{Verifying Redemption} The verification takes as input the 
\nulli and the proof \dcmproof, consisting of the \zksnark proof \zkproof and the ciphertext $c$. 
The algorithm runs  $\zkprove$ to verify the validity of the statement defined
in~\cref{eq:arbit}.

The detailed construction of $\aDDM$ is defined in
\cref{fig:construction-arbitrary-amt}.

\subsection{Security Analysis}
\label{sec:security-analysis}
We formally prove the security of both $\fDDM$ and $\aDDM$. 
Due to the space limit, we shift the proofs to~\cref{sec:missing-proofs}.

\begin{reptheorem}{fixed}
Given the security of \zksnark instance, $\Pi$,
the Merkle tree, $\tree$, and the tagging scheme, $\tagscheme$
as defined in~\cref{sec:prelim}, then $\fDDM$~for fixed
input data satisfies \emph{correctness}, \emph{no one-more redemption},
\emph{theft prevention}, \emph{non-slanderability}, and \emph{unlinkability}. 
\end{reptheorem}

\begin{reptheorem}{arbit}
Given the security of \zksnark instance, $\Pi$, the Merkle tree, $\tree$,
the tagging scheme, $\tagscheme$
the commitment scheme, $\mathtt{COM}$
and the IND-CPA encryption scheme, $\mathsf{E}$ as defined
in~\cref{sec:prelim}, then $\aDDM$~for arbitrary input data satisfies
\emph{correctness}, \emph{no one-more redemption},
\emph{theft prevention}, \emph{non-slanderability}, and \emph{unlinkability}. 
\end{reptheorem}

\section{Composability of DTL with Blockchain Applications} 
\label{sec:solution-overview}

In this section, we showcase the composability of DTL, specifically, its
ability to provide unlinkability to numerous blockchain applications that
inherently lack this assurance.



\subsection{Our Generic Approach for Composability}

\pparagraph{System Model and Assumptions} 
The use of our data tumbler layer is orchestrated by a smart contract, $\cdtl$. 
Multiple users, who have access to the
tumbler contract, use it to tumble their data in a decentralized manner. More
concretely, users commit their data to the tumbler contract so that later only
the intended recipient (which can be the same as the depositing user) can
withdraw. An illustrative example  is in~\cref{fig:approach overview}.

We assume that users have enough funds to pay for the fees charged to
interact with the tumbler contract. Moreover, we assume authenticated,
confidential communication channels between any two users that interact with
each other outside the tumbler contract.  
The code of the smart contracts is trusted for
integrity, but not for privacy. This reflects the fact that, in practice, the
distributed ledger operators (e.g., miners) 
reach consensus on the correct execution of a call
to tumbler contract, but they have access to data used for such computation.
Users can be arbitrarily malicious and interact with the tumbler contract in any
fashion. They are only restricted to respect the API of the tumbler contract,
since otherwise, the call is straightforwardly rejected by the distributed
ledger operators.

\pparagraph{Additional Data Structures used in $C_{\mathsf{DTL}}$} To manage the
\emph{concurrency} of multiple commits, the smart contract must maintain
a deposit history, denoted as $\dpl$, which includes all commits~\footnote{
For the concrete implementation,
each deposit can be realized as a triggering event, allowing users to avoid
storage costs. Explicitly referencing $\dpl$ simplifies the explanation of our
protocol.
}. 
Additionally, it requires an accumulator history, $\drootlist$, consisting of
$k$ recent
Merkle roots. The reason is
that each data commit will alter the root of the Merkle tree; hence, storing
$k$ roots, allows $C_{\mathsf{DTL}}$ to handle up to $k$ concurrent data commits.
Depending on the speed of the blockchain, this parameter can be adjusted accordingly. 
We refer readers to \cite{le2020amr} for a detailed explanation. 
Finally, to prevent \emph{double redeeming}, $C_{\mathsf{DTL}}$ needs to maintain
a tag list, $\wnl$, storing all tags, $\nulli$, resulting from data redeemptions.

\begin{figure*}[t]
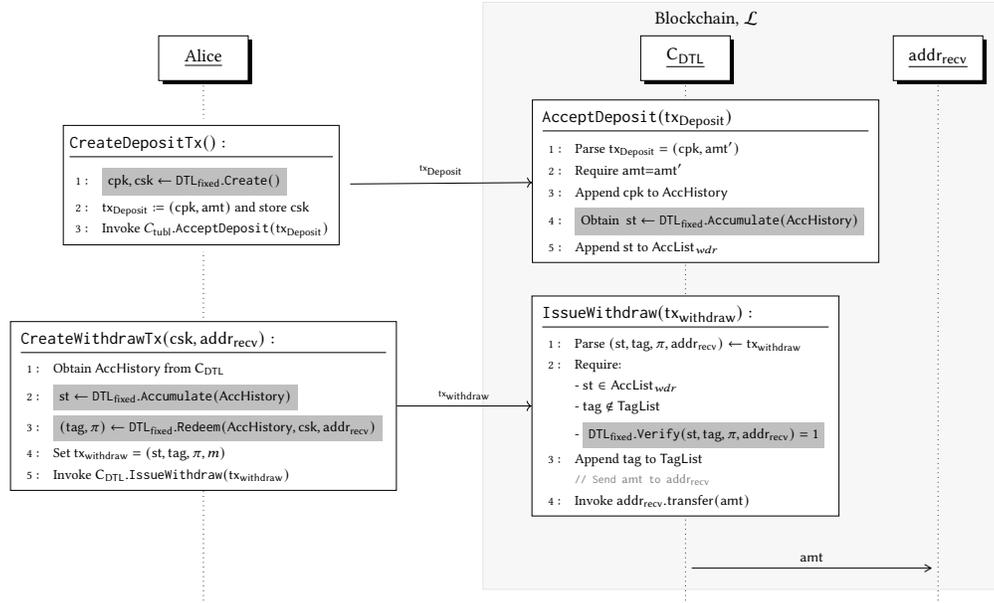

    \centering
    \resizebox{1.6\columnwidth}{!}{
    \begin{sequencediagram}
    \filldraw[fill=black!30, draw=black,  opacity=0.1, name=blockchain] (6,-9.5) rectangle (15.3,1);
    \node at (10, 0.7) {Blockchain, $\mathcal{L}$};
    \newinst{a}{Alice}
    \newinst[7]{m}{$\cdtl$}
    \newinst[2.9]{b}{$\addr_\mathsf{recv}$}
    \IMess{a}{}{m}
    \node (A) [fill=white] at ([yshift=-.95cm] mess from){
    \pcbox{\begin{subprocedure}
    \procedure[linenumbering, bodylinesep=2pt]{$\CreateDTX():$}
    {
      \colorbox{lightgray}{\strut $\dcmcoinpk, \dcmcoinsk \leftarrow \fDDM.\dcmcreatecoin()$}\\
      \txd \define (\dcmcoinpk, \xcoin) \text{ and store } \dcmcoinsk\\
      \text{Invoke } C_{\mathsf{tubl}}.\acceptdeposit(\txd)
    }
    \end{subprocedure}}
    };
    \node (B) [fill=white, draw=black] at ([xshift=.36cm, yshift=-.88cm] mess to){
    \begin{subprocedure}
      \procedure[linenumbering, bodylinesep=2pt]{$\acceptdeposit(\txd)$}{
        \text{Parse } {\txd = (\dcmcoinpk, \xcoin')}\\
        \text{Require } \xcoin {=} \xcoin'\\ 
        \text{Append } \dcmcoinpk \text{ to } \dpl\\
        \colorbox{lightgray}{\text{Obtain } \strut $\dcmcoinaccstate \leftarrow \fDDM.\dcmaccumulatecoin(\dpl)$}\\
        \text{Append } \dcmcoinaccstate \text{ to } \drootlist
      }
    \end{subprocedure}
    };
   
    \postlevel
    \postlevel
    \postlevel
    \postlevel
    \postlevel
    \postlevel
    \postlevel
    \postlevel
    \postlevel
    \postlevel
    \postlevel
    \postlevel
    \node (C) [fill=white, draw=black] at ([yshift=-4.9cm] mess from){
      \begin{subprocedure}
      \procedure[linenumbering, bodylinesep=2pt]{$\CreateWTX(\dcmcoinsk, \addr_\mathsf{recv}):$}
        {
          \text{Obtain } \dpl \text{ from } \cdtl\\
          \colorbox{lightgray}{$\dcmcoinaccstate \leftarrow \fDDM.\dcmaccumulatecoin(\dpl)$}\\
          \colorbox{lightgray}{$(\nulli,\pi) \leftarrow \fDDM.\dcmredeemcoin(\dpl, \dcmcoinsk, \addr_\mathsf{recv})$}\\
          \text{Set } \txw = (\dcmcoinaccstate, \nulli, \pi, m)\\
          \text{Invoke } \cdtl.\issuewithdraw(\txw)
        }
      \end{subprocedure}
    };
    \node (D) [fill=white, draw=black] at ([yshift=-4.9cm] mess to){
    \begin{subprocedure}
      \procedure[linenumbering, bodylinesep=2pt]{$\issuewithdraw(\txw):$}
      {
        \text{Parse } (\dcmcoinaccstate, \nulli, \pi, \addr_\mathsf{recv}) \leftarrow \txw\\
        \text{Require: } \pcskipln\\
        \text{- } \dcmcoinaccstate \in \drootlist \pcskipln\\
        \text{- } \nulli \notin \wnl \pcskipln\\
        \text{- } \colorbox{lightgray}{$\fDDM.\dcmverify(\dcmcoinaccstate, \nulli, \pi, \addr_\mathsf{recv}) = 1$}\\
        \text{Append } \nulli \text{ to } \wnl \pcskipln\\
        \glcomment{\color{gray} Send \amt to $\addr_\mathsf{recv}$}\\
        \text{Invoke } \addr_\mathsf{recv}.\mathsf{transfer}(\amt)
      } 
    \end{subprocedure}
    };
    \draw [->] (A) -- node [midway,above] {\tiny  $\txd$} (B);
    \draw [->] (C) -- node [midway,above] {\tiny  $\txw$} (D);
    \Mess{m}{$\amt$}{b}
    \end{sequencediagram}
    }
    \vspace{-0.3cm}
    \caption{Fixed-amount Unlinkable Payment Using $\fDDM$ (highlighted in gray). 
    In this construction, Alice uses different addresses to invoke the deposit and withdraw functions. 
    The recipient of the
    funds is encoded as the message during the withdrawal process.
    }
    \label{fig:dml-mixer}
\end{figure*}

\subsection{Unlinkable Fixed-amount Payment} 
\label{subsec:fixed-amount}

\pparagraph{Problem Statement} 
Assume a set of senders $\mathcal{S}$ and a set of receivers $\mathcal{R}$. 
Each sender $s \in \mathcal{S}$ owns a coin $c_i$ of a fixed value $\amt$. 
In this setting, the unlinkable payment problem consists of transferring each
coin $c_i$ from a sender $s_i$ to a receiver $r_j$ so that the following
properties are maintained: 
\emph{(i) correctness}: the coin $c_i$ is transferred from $s_i$ to the intended $r_j$; 
\emph{(ii) unlinkability}: on input the set of transfers, an adversary cannot determine which honest receiver $r_j$ received the coin from an honest sender $s_i$ better than guessing among the set of honest receivers;
\emph{(iii) theft prevention}: an adversary cannot transfer a coin from honest sender $s_i$ to a receiver other than the intended $r_j$;
\emph{(iv) availability}: an adversary cannot prevent any honest user from sending/receiving their coins.

\pparagraph{Our Solution} 
To implement this application, we use $\fDDM$ 
within the tumbler smart contract, as shown in~\cref{fig:dml-mixer}. 
%
 The system consists of two components: the
 \emph{clients} and the \emph{smart contract} running at address
 $\cdtl$. The clients utilize blockchain addresses to communicate
 with $\cdtl$, which manages an \emph{asset pool}. Clients have the capability
 to either deposit or withdraw coins into or
 from this pool via the following two functionalities: $\CreateDTX$ and
 $\CreateWTX$. 

 Similarly, the smart contract has two functions that help process deposit and
 withdraw transactions from clients: \acceptdeposit and $\issuewithdraw$.
 The system utilizes the $\fDDM$ construction (see~\cref{sec:dcm-construction-simple}) and works as follows:
 \begin{enumerate}[leftmargin=0.5cm]
   \item \emph{Deposit, $\mathsf{addr_{client}} \rightarrow \cdtl$:} 
     Upon receiving a valid deposit from a client, $\cdtl$ accumulates it into a Merkle tree. 
 \item \emph{Unlinkable Withdrawal, $\cdtl \rightarrow \mathsf{addr_{receiver}}$:}
 In order to make a payment, clients issue
 a withdrawal transaction with message $m$ designating the recipient address. This transaction demonstrates that the client possesses
 the witness to one of the public keys accumulated to the Merkle tree. If
 verified as valid, the smart contract transfers the funds to the recipient.
 \end{enumerate}


 
 Given a secure $\fDDM$, reasoning about the
 security properties of the suggested construction becomes straightforward. 
 Unlinkable Fixed-amount Payment defined in \cref{fig:dml-mixer} effectively ensures \emph{correctness},
 \emph{unlinkability}, and \emph{theft prevention}, primarily due to the inherent
 properties of the \fDDM. Regarding \emph{availability}, it is ensured by the
 censorship resistance provided by the underlying blockchain protocol and the non-slanderability property of $\fDDM$.

\begin{figure*}[!h]
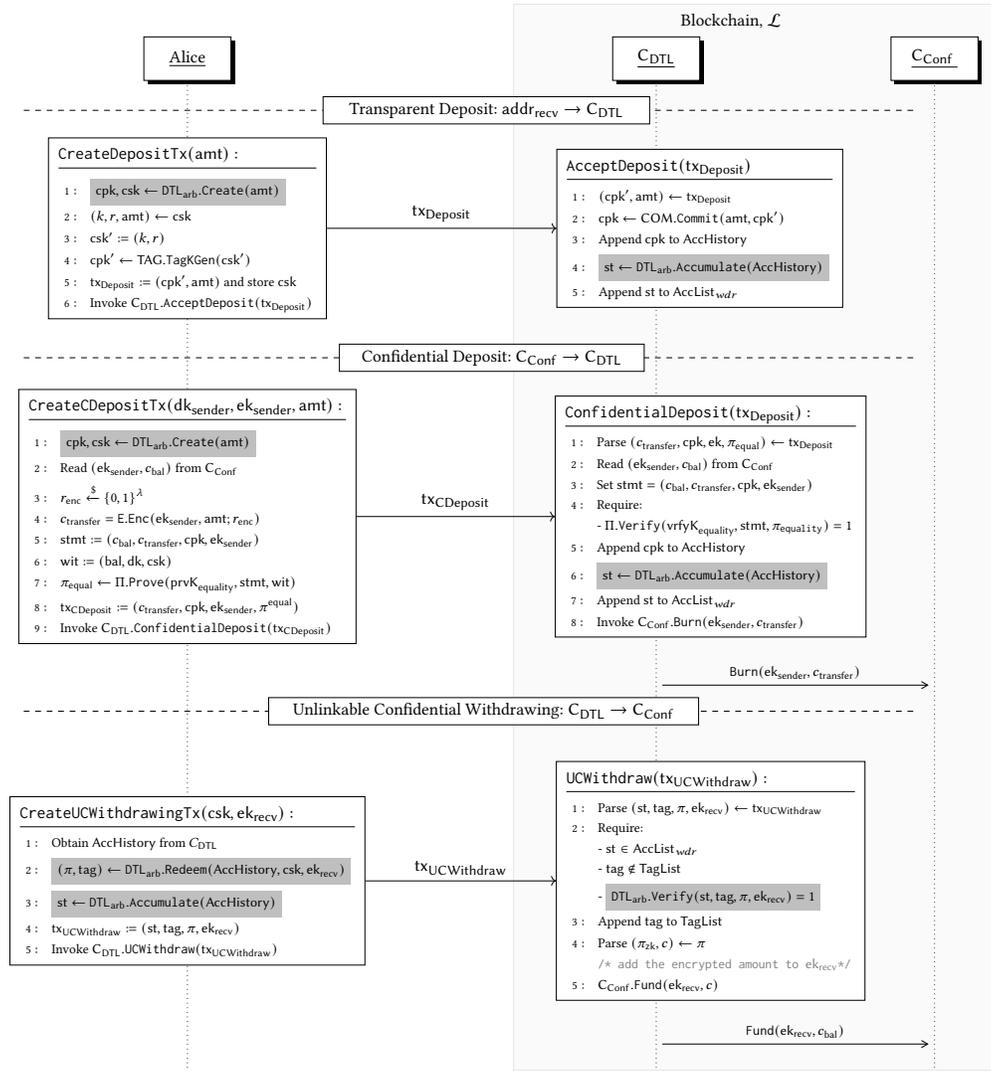

    \centering
    \resizebox{1.6\columnwidth}{!}{
    \begin{sequencediagram}
    \filldraw[fill=black!20, draw=black,  opacity=0.1, name=blockchain] (7,-18.6) rectangle (16,1);
    \node at (11, 0.7) {Blockchain, $\mathcal{L}$};
    \newinst{a}{Alice}
    \newinst[7]{m}{$\cdtl$}
    \newinst[3.5]{b}{$\ccon$ }
    \postlevel
    \draw[dashed, line width=0.2pt] (-2,-0.95) -- (14.4,-0.95);
    \filldraw[fill=white] (3.5,-0.7) rectangle (9.5,-1.2) node[midway] {\normalsize {Transparent} Deposit: $ \addr_{\mathsf{recv}}\rightarrow \cdtl$};
    \postlevel
    \postlevel
    \postlevel
    \postlevel
    \postlevel
    \IMess{a}{}{m}
    \node (A) [fill=white, draw=black] at ([yshift=1.8cm] mess from){
    \begin{subprocedure}
    \procedure[linenumbering, bodylinesep=2pt]{$\CreateDTX(\amt):$}
    {
      \colorbox{lightgray}{\strut $\dcmcoinpk, \dcmcoinsk \leftarrow \aDDM.\dcmcreatecoin(\amt)$}\\
      (k,r, \amt) \leftarrow \dcmcoinsk\\
      \dcmcoinsk' \define (k,r)\\
      \dcmcoinpk' \leftarrow \tagscheme.\tagkeygen(\dcmcoinsk')\\
      \txd \define (\dcmcoinpk', \amt) \text{ and store } \dcmcoinsk\\
      \text{Invoke } \cdtl.\acceptdeposit(\txd)
    }
    \end{subprocedure}
    };
    \node (B) [fill=white, draw=black] at ([xshift=0.8cm, yshift=1.8cm] mess to){
    \begin{subprocedure}
       \procedure[linenumbering, bodylinesep=2pt]{$\acceptdeposit(\txd)$}{
        (\dcmcoinpk', \amt) \leftarrow \txd\\
        \dcmcoinpk \leftarrow \commit.\commitcreate(\amt, \dcmcoinpk')\\
        \text{Append } \dcmcoinpk \text{ to } \dpl\\
        \colorbox{lightgray}{\strut $\dcmcoinaccstate \leftarrow \aDDM.\dcmaccumulatecoin(\dpl)$}\\
        \text{Append } \dcmcoinaccstate \text{ to } \drootlist
      }
    \end{subprocedure}
    };
    \draw [->] (A) -- node [midway,above] {$\txd$} (B);
    \draw[dashed, line width=0.2pt] (-2,-5.5) -- (14.4,-5.5);
    \filldraw[fill=white] (3.8,-5.25) rectangle (9.4,-5.75) node[midway] {\normalsize\normalsize {Confidential} Deposit: $\ccon\rightarrow \cdtl$};
    \node (C) [fill=white, draw=black] at ([yshift=-3.5cm] mess from){
    \begin{subprocedure}
      \procedure[linenumbering, bodylinesep=2pt]{$\CreateCFTX(\dk_\mathsf{sender}, \ek_\mathsf{sender}, \amt):$}
      {
        \colorbox{lightgray}{\strut $\dcmcoinpk, \dcmcoinsk \leftarrow \aDDM.\dcmcreatecoin(\amt)$}\\
        {\text{Read } (\ek_{\mathsf{sender}}, c_{\mathsf{bal}}) \text{ from } \ccon}\\
        r_{\mathsf{enc}} \sample \bset^\lambda\\
        c_{\mathsf{transfer}}=\mathtt{E}.\mathtt{Enc}(\ek_{\mathsf{sender}}, \amt; r_{\mathsf{enc}})\\
        \statement \define (c_{\mathsf{bal}}, c_{\mathsf{transfer}}, \dcmcoinpk, \ek_{\mathsf{sender}})\\
        \witness \define (\bal, \dk, \dcmcoinsk)\\
        \pi_{\mathsf{equal}} \leftarrow \zkprove(\pk_{\mathsf{equality}}, \statement, \witness)\\
        \txcd \define (c_{\mathsf{transfer}}, \dcmcoinpk, \ek_{\mathsf{sender}}, \pi^{\mathsf{equal}})\\
        \text{Invoke }\cdtl.\acceptconfideposit(\txcd)
      }
    \end{subprocedure}
    };
    \node(D) [fill=white, draw=black] at ([xshift=1cm, yshift=-3.5cm] mess to){
    \begin{subprocedure}
    \procedure[linenumbering, bodylinesep=2pt]{$\acceptconfideposit(\txd):$}
      {
        \text{Parse } (c_{\mathsf{transfer}}, \dcmcoinpk, \ek, \pi_{\mathsf{equal}}) \leftarrow \txd\\
        \text{Read } (\ek_{\mathsf{sender}}, c_{\mathsf{bal}}) \text{ from } \ccon\\
        \text{Set } \statement=(c_{\mathsf{bal}}, c_{\mathsf{transfer}}, \dcmcoinpk, \ek_{\mathsf{sender}})\\
        \text{Require: } \pcskipln\\
        {\text{- } \zkverify(\vk_{\mathsf{equality}}, \statement, \pi_{\texttt{equality}}) = 1}\\
        \text{Append } \dcmcoinpk \text{ to } \dpl\\
        \colorbox{lightgray}{\strut $\dcmcoinaccstate \leftarrow \aDDM.\dcmaccumulatecoin(\dpl)$}\\
        \text{Append } \dcmcoinaccstate \text{ to } \drootlist\\
        \text{Invoke } \ccon.\mathtt{Burn}(\ek_\mathsf{sender}, c_{\mathsf{transfer}})
      } 
    \end{subprocedure}
    };
    \draw [->] (C) -- node [midway,above] {$\txcd$} (D);
    \postlevel
    \postlevel
    \postlevel
    \postlevel
    \postlevel
    \postlevel
    \postlevel
    \postlevel
    \postlevel
    \postlevel
    \Mess{m}{$\mathtt{Burn}(\ek_{\mathsf{sender}}, c_{\mathsf{transfer}})$}{b}
    \IMess{a}{}{m}
    \draw[dashed, line width=0.2pt] (-2,-12) -- (14.4,-12);
    \filldraw[fill=white] (2.5,-11.75) rectangle (10.4,-12.25) node[midway] {\normalsize Unlinkable Confidential Withdrawing: $\cdtl\rightarrow \ccon$};
    \node (E) [fill=white, draw=black] at ([yshift=-3cm] mess from){ 
      \begin{subprocedure}
        \procedure[linenumbering, bodylinesep=2pt]{$\CreateCFWTX(\dcmcoinsk,\ek_\mathsf{recv}):$}
        {
          \text{Obtain } \dpl \text{ from } C_\mathsf{DTL}\\
          \colorbox{lightgray}{\strut $(\pi, \nulli) \leftarrow \aDDM.\dcmredeemcoin(\dpl, \dcmcoinsk, \ek_\mathsf{recv})$}\\
          \colorbox{lightgray}{\strut $\dcmcoinaccstate \leftarrow \aDDM.\dcmaccumulatecoin(\dpl)$}\\
          \txcw \define (\dcmcoinaccstate, \nulli, \pi, \ek_\mathsf{recv})\\
          \text{Invoke }\cdtl.\issueconfiwithdraw(\txcw)
        }
      \end{subprocedure}
      };
    \node (F) [fill=white, draw=black] at ([xshift=1cm, yshift=-3cm] mess to){
    \begin{subprocedure}
      \procedure[linenumbering, bodylinesep=2pt]{$\issueconfiwithdraw(\txcw):$}
      {
        \text{Parse } (\dcmcoinaccstate, \nulli, \pi, \ek_\mathsf{recv}) \leftarrow \txcw\\
        \text{Require: } \pcskipln\\
        \text{- } \dcmcoinaccstate \in \drootlist \pcskipln\\
        \text{- } \nulli \notin \wnl \pcskipln\\
        \text{- } \colorbox{lightgray}{\strut $\aDDM.\dcmverify(\dcmcoinaccstate, \nulli, \pi, \ek_\mathsf{recv}) = 1$}\\
        \text{Append } \nulli \text{ to } \wnl \\
        \text{Parse } (\zkproof, c) \leftarrow \pi \pcskipln\\ 
        \texttt{\color{gray}/* add the encrypted amount to $\ek_{\mathsf{recv}}$*/}\\
        \ccon.\fun{Fund}(\ek_\mathsf{recv}, c)
      } 
    \end{subprocedure}
    };
    \draw [->] (E) -- node [midway,above] {$\txcw$} (F);
    \postlevel
    \postlevel
    \postlevel
    \postlevel
    \postlevel
    \postlevel
    \postlevel
    \postlevel
    \postlevel
    \Mess{m}{$\mathtt{Fund}(\ek_{\mathsf{recv}}, c_{\bal})$}{b}
    
    \end{sequencediagram}
    }
    \vspace{-0.3cm}
    \caption{Confidential and Unlinkable Payment Using~\aDDM~(highlighted in gray). 
      }
    \label{fig:ddm-zether-plaintext-deposit}
\end{figure*}

\subsection{Unlinkable Confidential Payment}
\label{sub:ddmzether}

The approach in section~\ref{subsec:fixed-amount} requires  
transactions with fixed amounts. 
Here we show how to integrate $\aDDM$ with confidential payment contracts like
Zether~\cite{zether-bunz-2020}, to provide their users with additional unlinkability guarantees.   
This leads to a payment system with 
unlinkability and confidentiality.

\pparagraph{Problem Statement} We consider here the same problem statement as
in~\ref{subsec:fixed-amount}, relaxing the condition that all payments must be
of a fixed value.

\pparagraph{Our Solution}
To implement this application, we start with the assumption of a confidential payment system like Zether ($\ccon$). We then compose this system with a smart contract, $\cdtl$,
using  $\aDDM$. 
The specifics of our approach are illustrated in \cref{fig:ddm-zether-plaintext-deposit}.
In this scenario, $\cdtl$ is assumed to have the capability to directly fund and burn the balances of 
 accounts on $\ccon$ through the functions $\mathtt{Fund}()$ and $\mathtt{Burn}()$ upon valid transactions.
In this construction, \sloppy{$\CreateDTX$, $\CreateCFTX$, and $\CreateCFWTX$ outline the methods which clients create transactions to deposit} \emph{transparent} amounts from the blockchain to $\cdtl$, 
deposit \emph{confidential} amounts from $\ccon$ to the $\cdtl$, 
and fund confidential amounts from $\cdtl$ to $\ccon$, respectively.
\begin{enumerate}[leftmargin=0.5cm]
  \item \emph{Transparent Deposit, $\addr_{\mathsf{sender}}\rightarrow \cdtl$.} 
  Similar to the deposit procedure described in \cref{subsec:fixed-amount}, this method enables existing users to deposit arbitrary amounts of tokens from the blockchain into $\cdtl$, preparing them for subsequent withdrawal into a confidential contract like $\ccon$. Upon receiving a deposit, $\cdtl$ integrates the depositor's public key into the Merkle Tree.
  
\item \emph{Confidential Deposit, $\ccon \rightarrow \cdtl$.}
    In this scenario, clients aim to deposit a confidential token amount, denoted as $c_{\mathsf{transfer}} = \mathtt{Enc}(\ek, \amt)$, from $\ccon$ to the $\cdtl$ contract. Upon receiving a valid confidential deposit, $\cdtl$ stores $\dcmcoinpk$, a commitment to the same amount $\commitcreate(\amt, \cdot)$, as a leaf in the Merkle tree. 
    However, before updating the tree, the $\cdtl$ contract must verify that $c_{\mathsf{transfer}}$ uses the identical $\amt$ value used in $\dcmcoinpk$. 
    To facilitate this, we introduce an additional zero-knowledge proof here. 
    We denote $c_{\mathsf{bal}}$ as the sender's encrypted balance.
    The concrete relation is as in~\cref{eq:confideposit}.

\item \textit{Unlinkable Confidential Withdrawal, $\cdtl \rightarrow \ccon$.} 
    The unlinkable confidential withdrawal is straightforward and directly leverages \dcmredeemcoin in \aDDM. 
    Senders obtain a valid redeem proof containing $c_{\mathsf{transfer}}$, an encryption of $\amt$ using the recipient's encryption key. This ciphertext can be homomorphically added to the recipient's encrypted balance to update it.
\end{enumerate}
{\footnotesize 
    \begin{equation}~\label{eq:confideposit}
            R_{\mathsf{equality}} \define \\
            \left\{ 
            \begin{aligned}
            & \statement \define (c_{\mathsf{bal}}, c_{\mathsf{transfer}}, \dcmcoinpk, \ek);\\
            & \witness \define  (\mathsf{bal}, \dk, \dcmcoinsk = (k, r, \amt)): \\
            & \mathsf{bal} = \fun{Dec}(\dk, c_{\mathsf{bal}})\wedge
            \amt = \fun{Dec}(\dk, c_{\mathsf{transfer}}) \wedge\\
            &  \ek=\fun{derive}(\dk)\wedge 0 \leq \amt \leq \mathsf{bal}~\wedge \\
            & \dcmcoinpk' = \tagscheme.\tagkeygen(\dcmcoinsk) \land \\
            & \dcmcoinpk = \commit.\commitcreate(\amt; \dcmcoinpk')
            \end{aligned}
          \right\}
    \end{equation}
}
Given a secure $\aDDM$, it is straightforward to reason that our unlinkable and confidential payment achieves similar properties as defined in \cref{subsec:fixed-amount}.

\pparagraph{Flexibility in Choosing Privacy Guarantee} {The composability of $\cdtl$ grants users the flexibility to choose between issuing unlinkable, confidential, or both types of transactions, without requiring that all transactions adhere to the unlinkable model. This level of flexibility is not commonly offered by existing solutions like Monero, Zcash, or Zether, which typically focus on either unlinkability or confidentiality.

{Users can prioritize privacy or efficiency based on their needs. An unlinkable and confidential payment can take two forms: Transparent Deposit (1) followed by Unlinkable Confidential Withdrawal (3), or Confidential Deposit (2) followed by Unlinkable Confidential Withdrawal (3).  This type of payment, detailed in~\cref{sec:evaluation}, consumes around $1.1$ to $1.8$ million gas. Alternatively, a confidential payment can be executed entirely within $\ccon$, requiring only one step and costing approximately $1$ million gas, as shown in \cite{eip1108}. }


}

\subsection{Unlinkable Weighted Voting}
\label{sub:ddmvoting}

We further highlight the composability property of DTL by demonstrating its ability to constuct unlinkable and weighted voting protocols. This effectively addresses the shortcomings, such as the absence of privacy and confidentiality, commonly found in existing DAO (Decentralized Autonomous Organization) voting protocols. 

\pparagraph{Problem Statement}
We aim to create a voting system that incorporates varying voting powers among participants. In contrast to typical blockchain voting systems, our objective is to additionally achieve the property of \emph{Ballot Privacy}. This entails safeguarding the confidentiality of individual voters' preferences, encompassing both the candidate they vote for and the quantity of their vote.

\pparagraph{Our Solution} Similar to the construction outlined in section
\ref{sub:ddmzether}, the protocol can be implemented using a contract,
$\cdtl$ using $\aDDM$ for unlinkability and a contract
$\ccon$ for confidentiality. For a voting protocol, our system can
be divided into four steps: \emph{setup, registration, voting}, and
\emph{reveal}. In particular, these phases are: 

\begin{enumerate}[leftmargin=0.5cm]
\item \emph{Setup Phase, $\addr_{i} \rightarrow
  \ccon$:} In the \emph{Setup} phase, candidates, $\addr_{candidate}$, register their
  public encryption key to $\ccon$. Conceptually, this can be understood as
  each candidate being associated with a unique encryption key to which voters
  can securely send their votes.
\item \emph{Registration Phase, $\addr_{\mathsf{voter}} \rightarrow \cdtl$:}
 The \emph{Registration} phase resembles the transparent deposit, where voters deposit tokens into $\cdtl$. 
 These tokens act as representations of voters' eligibility and voting influence. 
 They should be allocated to voters before the start of the voting process.

\item \emph{Unlinkable and Confidential Voting, $\cdtl \rightarrow \ccon$}: 
This process mirrors the unlinkable and confidential withdrawing described in~\cref{sub:ddmzether}. Once all participants have completed registration, the voting phase starts. Voters cast their ballots for candidates with both unlinkability and confidentiality ensured. Voters achieve this by withdrawing voting powers from $\cdtl$ and allocating them to their chosen candidates, encrypting the number of tokens in the process. Leveraging the additive homomorphic properties of the encryption scheme, the sum of all encrypted votes for a candidate can be securely computed, resulting in the encryption of the total votes received by that candidate. Utilizing the capabilities of $\aDDM$ effectively eliminates the link between voters and their selected candidates.

\item \emph{Reveal, $\addr_{i} \rightarrow \ccon$}:
In the \emph{reveal} stage, each candidate, $\addr_{i}$, discloses their total vote count by revealing the decrypted total of their vote balance. To ensure the integrity of this disclosure, candidates must prove that the revealed total is the actual decrypted amount. Therefore, the candidate is required to additionally provide a zero-knowledge proof to validate the following statement:

{\footnotesize 
\begin{equation*}
    \begin{split}
        R_{\mathsf{reveal}} := \left\{
        \begin{aligned}
        &\statement\define (\ek_i, c_i, \bal_i);
        \witness \define \dk : \\
        & \ek_i = \fun{Derive}(\dk)\wedge  \bal_i = \fun{E.Dec}(\dk, c_i)
        \end{aligned}
      \right\}
    \end{split}
\end{equation*}
}

where $c$ and $bal$, respectively, stand for the encrypted and plaintext accumulated votes. The prover proves it knows the decryption key $\dk$ corresponds to a public encryption key $\ek$, and all the decryptions are correct.
\end{enumerate}





In this protocol, it is straightforward to see that ballot privacy is guaranteed by properties of $\aDDM$.

Finally, due to the space constraints, we move the details of this protocol to~\cref{sec:voting}.
\section{Evaluation}
\label{sec:evaluation}

\subsection{Testbed}
\pparagraph{Cryptographic Primitives} 
In this evaluation, we use Groth16 \zksnark~\cite{groth2016size} as our zero-knowledge proof system
for its efficiency in proof size and on-chain verification cost.
We use the Poseidon Hash~\cite{grassi2021poseidon} to instantiate Merkle Tree as it yields a fewer number of
constraints in circuits (or R1CS), and a lower gas cost on blockchain. Similarly,
we use the Pedersen Hash~\cite{pederson-hash} for the implementation of the tagging scheme 
as well as the commitment scheme. 
In \aDDM, we use ElGamal encryption with the Baby Jubjub
elliptic curve. The underlying prime field of Baby Jubjub has the same
order as the BN254 curve. This is essential because this
makes it easier to do elliptic curve operations inside a circuit whose
satisfiability is going to be proved by Groth16 (with BN254). 
 For all applications, we set the Merkle tree depth to $20$, allowing for an anonymity set of up to $2^{20}$. 
 
\pparagraph{Software and  Hardware} We implemented all the circuits (R1CS) for
the zero-knowledge proof using \texttt{Circom}~\cite{circomlib}. We then use the \texttt{SnarkJS}~\cite{snarkjs} library for proof generation and on-chain verifier creation in Solidity. We deploy
it to a test network using the Truffle toolchain. We test our implementation on
a PC with 4-core 11th Gen Intel(R) Core(TM) i5-1137G7 @2.40Hz and 8GB memory.


\subsection{Performance}
We implemented all proposed applications and report the number of constraints in R1CS, proving time, and verification gas cost.

\pparagraph{Unlinkable Fixed-amount Payment} 
We report the numbers for the unlinkable fixed-amount payment in~\cref{tab:fixedPayment},
where $\addr_{\mathsf{sender}}$, $\cdtl$, and $\addr_{\mathsf{receiver}}$ 
denotes the sender, the tumbling contract employing $\fDDM$, and the receiver, respectively.
We recall that an unlinkable payment works as follows: A sender can deposit
money to the contract and later withdraw it to the recipient address ($\addr_{\mathsf{sender}}\rightarrow \cdtl \rightarrow \addr_{\mathsf{receiver}}$).
\begin{table}[t]
\centering
\caption{Unlinkable Fixed-amount Payment Performance.}
\resizebox{.8\columnwidth}{!}{%
\begin{tabular}{@{}ccc@{}}
\toprule
                 & \textbf{\begin{tabular}[c]{@{}c@{}} Deposit \\ $\addr_{\mathsf{sender}} \rightarrow \cdtl$ \end{tabular}} & \textbf{\begin{tabular}[c]{@{}c@{}} Withdraw \\ $\cdtl \rightarrow \addr_{\mathsf{receiver}}$ \end{tabular}}\\ \midrule
    R1CS Constraints & -                 & $8,146$             \\     
    Proving Time     & -                 & $1.15$s             \\     
    Gas Cost         & $767,565$         & $233,375$         \\     \bottomrule 
\end{tabular}
}
\label{tab:fixedPayment}
\end{table}

\pparagraph{Unlinkable and Confidential Payment}
We report the numbers for the unlinkable and confidential payment in Table
\ref{tab:anonPayment}, where $\addr_{\mathsf{client}}$, $\cdtl$, and $\ccon$
stand for the address of the client, the DTL Contract, and the Confidential Payment Contract (i.e.,
Zether), respectively. 
We recall that there are two ways to conduct unlinkable and confidential transfers: 
a user can send money from $\addr_{\mathsf{client}}$ to $\cdtl$, then withdraw to $\ccon$ (i.e., $\addr_{\mathsf{client}} \rightarrow \cdtl \rightarrow \ccon$), 
or user can transfer money from  $\ccon$ to $\cdtl$ then withdraw back to $\ccon$ ($\ccon \rightarrow \cdtl \rightarrow \ccon$). 
This unlinkable and confidential transfer incurs a total gas cost of approximately $1m$ for the first approach and $1.8m$ for the second approach, for an anonymity set size of up to $2^{20}$.
This is 
a significant improvement upon the previous state-of-the-art anonymous and
confidential transfer solution~\cite{diamond2021many}, where the gas cost is
$7.3m$ for an anonymity set of $8$ users, rising to $36m$ for $64$ users, exceeding the Ethereum block gas limit of $30m$\footnote{\url{https://ethereum.org/en/developers/docs/gas/}}.

\begin{table*}[t]
\centering
\caption{Confidential and Unlinkable Payment Performance. The anonymity set of Anonymous Zether transfer reported in the table is $8$, while ours is $2^{20}$.}
\resizebox{1.4\columnwidth}{!}{%
\begin{tabular}{@{}ccccc@{}}
\toprule
                 & \textbf{\begin{tabular}[c]{@{}c@{}} Transparent Deposit \\ $\addr_{\mathsf{client}} \rightarrow \cdtl$ \end{tabular}} & \textbf{\begin{tabular}[c]{@{}c@{}} Confidential Deposit \\ $\ccon\rightarrow \cdtl$ \end{tabular}} & \textbf{\begin{tabular}[c]{@{}c@{}}Unlinkable Withdraw\\ $\cdtl\rightarrow \ccon$\end{tabular}} & \begin{tabular}[c]{@{}c@{}}\textbf{AZether}~\cite{diamond2021many} \\ Transfer ($8$)\end{tabular} \\ \midrule
    R1CS Constraints & -                 & $8,671$             & $11,521$  &   -        \\     
    Proving Time     & -                 & $1.16$s             & $1.32$s   &   $1.9$s   \\     
    Gas Cost         & $767,565$         & $1,051,112$         & $258,467$ &   $7,306,703$         \\     \bottomrule 
\end{tabular}
}
\label{tab:anonPayment}
\end{table*}

\pparagraph{Unlinkable Weighted Voting}
Recall that there are three main phases in unlinkable weighted voting: the registration phase, the voting phase, and the revealing phase. We omit the setup phase as it is fairly straightforward. In this context, $\addr_{\mathsf{voter}}$, $\addr_{i}$, $C_\mathsf{DTL}$, and $\ccon$ represent the voter, the candidate, the DTL contract, and the confidential contract, respectively.
Voters register their tokens first and then vote for the candidate with the token in an unlinkable manner. 
At the end of voting, the candidate will reveal all the votes in zero knowledge with the private
key. 
The registration phase is akin to the deposit step in other applications, primarily due to the Merkle tree cost. Similarly, the unlinkable voting phase mirrors that of unlinkable withdrawal in confidential and unlinkable payments. The only difference is the reveal step 
performed by the candidate. 
The numbers are reported in \cref{tab:anonVote}.

\begin{table}[t]
\centering
\caption{Unlinkable Weighted Voting Cost}
\resizebox{.9\columnwidth}{!}{%
\begin{tabular}{@{}cccc@{}}
\toprule
                 & \textbf{\begin{tabular}[c]{@{}c@{}}Registration\\ $\addr_{\mathsf{voter}} \rightarrow \cdtl$ \end{tabular}}  & \textbf{\begin{tabular}[c]{@{}c@{}} Unlinkable Vote \\ $\cdtl\rightarrow \ccon$ \end{tabular}} &  \textbf{\begin{tabular}[c]{@{}c@{}}Reveal\\ $\addr_{i}\rightarrow \ccon$\end{tabular}}    \\ \midrule
R1CS Constraints & -    &  $11,521$ & $6,750$      \\
Proving Time     & -     & $1.19$s   & $1.03$s        \\
Gas Cost         & $767,565$ & $258,467$ & $291,151$  \\ \bottomrule
\end{tabular}
}
\label{tab:anonVote}
\end{table}

\section{Discussion}
\label{sec:discussion}
\pparagraph{\zksnark setup} As discussed in Section \ref{sec:evaluation}, we used Groth16 \zksnark to instantiate DTL for efficiency, which requires a trusted setup~\cite{groth2016size}. However, our DTL framework is \zksnark agnostic and can be instantiated with any \zksnark proof system providing the API in~\cref{sec:prelim}. If the trusted setup assumption is undesirable, DTL can be instantiated using proof systems utilizing a multi-party computation (MPC)
protocol where multiple users contribute shares to the trusted setup~\cite{SP:BCGTV15,FCW:BowGabGre18,EPRINT:BowGabMie17}, or \zksnark proof systems with a universal setup~\cite{CCS:CamFioQue19,EC:CHMMVW20,EPRINT:GabWilCio19}, though at an efficiency cost.

\pparagraph{Network Layer Privacy}
Most blockchain wallet clients, like MetaMask, typically rely on centralized services such
as Infura for blockchain data. These services, having access to the clients'
blockchain addresses, IP addresses, and queries about contract states,
hold significant privacy implications.
They could potentially
link various addresses belonging to the same client. 
To enhance privacy, it is
advisable for clients to either run an independent validating full blockchain
node or employ network-level anonymity tools, such as Tor, Nym~\cite{Daz2021TheNN}, or a Virtual Private
Network (VPN), prior to connecting with these centralized services. This
approach can help in safeguarding their privacy and reducing the risk of
address linkage.

\pparagraph{Using Relayer For Transaction Fee}
Initiating a transaction necessitates the payment of fees. To preserve
anonymity, clients should avoid using the same address for these payments, as
doing so may allow their addresses to be linked. In real-world applications,
users can use a relayer. This relayer can send transactions on behalf of users
and is potentially compensated for a portion of a successful transaction. The address of 
the relayer can be encoded in the application-dependent message, $m$. The
relayer can obtain the relevant client proof via a separate communication
channel.

\pparagraph{Further Optimization: Batch Update of Merkle Tree} A single update
to the Merkle Tree constitutes a significant cost in our applications, coming in at
$767k$ gas on the blockchain. A potential enhancement is to employ the technique
introduced in~\cite{mountain-torn2023}. The contract can postpone the update
process until adequate new leaves are available to formulate a compact
sub-tree. 
update. 
Implementing this method could significantly decrease the gas cost for transactions related to Merkle Tree updates by $0.5m$ gas for a subtree of size $4$. 
\pparagraph{Application to Anonymous Rate Limiting Nullifier} Anonymous Rate Limiting Nullifiers (RLNs)~\cite{ducdske2022, RLNgithub} are cryptographic constructs that enable spam prevention in anonymous messaging systems by limiting each user to one message per time period while preserving their privacy. The key challenge is maintaining unlinkability between messages from the same user across different time periods while ensuring they cannot send multiple messages within the same period.

The Data Tumbling Layer (\fDDM) provides a simple way to implement RLNs. In DTL, users can commit their secret key, which then generates unique nullifiers per round through $\mathtt{TAG}.\mathtt{TagEval}(\mathsf{csk} || \mathsf{round})$. The redeem function can be used to verify membership and check for duplicate nullifiers within rounds, preventing spam while maintaining cross-round unlinkability. This approach leverages $\fDDM$'s proven security properties for membership proofs while adding the rate-limiting capability through round-specific nullifiers, providing a robust foundation for anonymous spam prevention.
Moreover, the system can further enforce a slashing mechanism on spam users by having them commit to a polynomial using polynomial commitment~\cite{kate2010constant} which helps reveal their identity if they exceed the message threshold.

\section{Conclusion}
\label{sec:conclusion}
In this work, we introduced the Data Tumbling Layer (DTL), a novel cryptographic primitive that enables non-interactive data tumbling with strong security and privacy guarantees. Through two constructions -  $\fDDM$ for fixed data and $\aDDM$ for arbitrary data - we demonstrated how to achieve crucial security properties including theft prevention, non-slanderability, unlinkability, and no one-more redemption. We proved the security of both constructions under standard cryptographic assumptions and showcased DTL's practical utility through three concrete applications: unlinkable fixed-amount payments, unlinkable confidential payments, and unlinkable weighted voting. Our implementation and evaluation demonstrated that DTL transactions can be processed efficiently, taking less than 1.5 seconds on a standard laptop with gas costs under 1.8 million, while supporting large anonymity sets of up to $2^{20}$. By providing a modular primitive that can be easily integrated with existing applications, DTL bridges an important gap in blockchain privacy solutions, enabling developers to add unlinkability guarantees to their applications without designing complex cryptographic protocols from scratch.


\section*{Disclaimer}
Case studies, comparisons, statistics, research and recommendations are provided ``AS IS'' and intended for informational purposes only and should not be relied upon for operational, marketing, legal, technical, tax, financial or other advice.  Visa Inc. neither makes any warranty or representation as to the completeness or accuracy of the information within this document, nor assumes any liability or responsibility that may result from reliance on such information.  The Information contained herein is not intended as investment or legal advice, and readers are encouraged to seek the advice of a competent professional where such advice is required.

These materials and best practice recommendations are provided for informational purposes only and should not be relied upon for marketing, legal, regulatory or other advice. Recommended marketing materials should be independently evaluated in light of your specific business needs and any applicable laws and regulations. Visa is not responsible for your use of the marketing materials, best practice recommendations, or other information, including errors of any kind, contained in this document.

\bibliographystyle{ACM-Reference-Format}
\bibliography{abbrev3,references,crypto}

\appendix

\newcommand{\dcmoutput}{\mathsf{output}}
\newcommand{\dcminput}{\mathsf{input}}
\newcommand{\generatorsetinput}{\mathsf{setinput}}
\newcommand{\generatorcompute}{\mathsf{compute}}

\section{Detailed Construction for Unlinkable Weighted Voting}
\label{sec:voting}
This section contains the detailed construction of the unlinkable weighted voting protocol. The detailed description of this protocol is in \cref{fig:voting-voter} and \cref{fig:voting-candidate}.
In both protocols for candidates and voters, we assume there exists a helper function, $\fun{GetStage}()$
that allows participants to learn the current phase of the voting protocol. 
\begin{figure*}[!t]
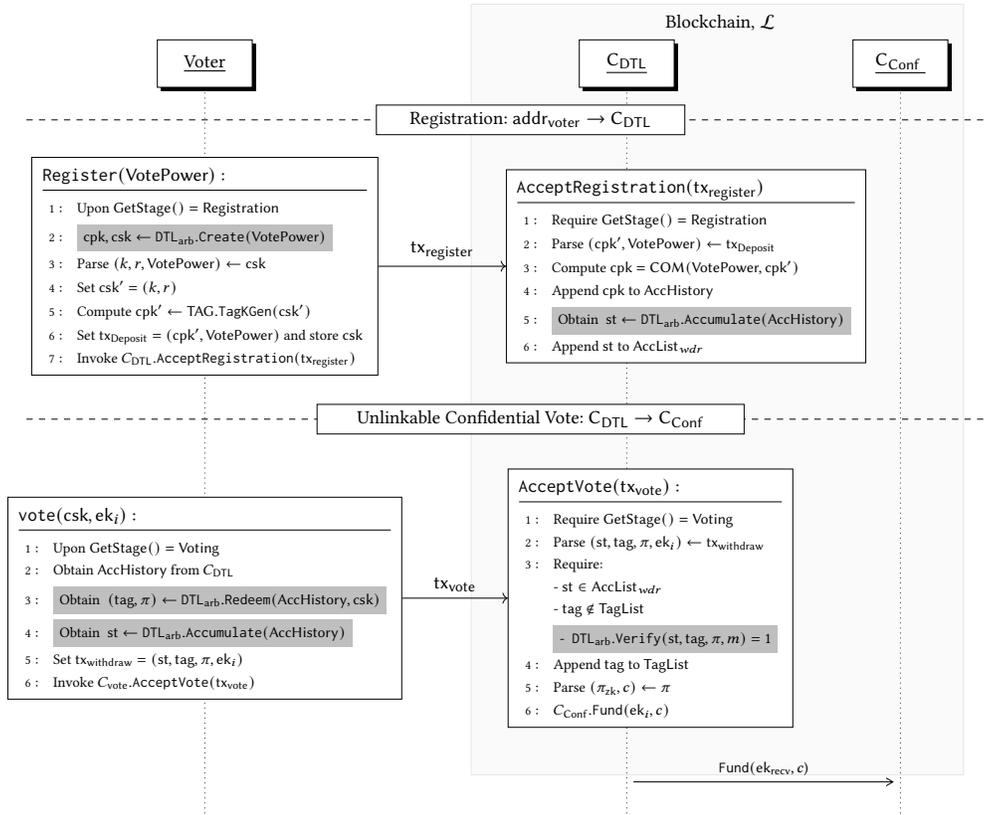

    \centering
    \resizebox{1.6\columnwidth}{!}{
    \begin{sequencediagram}
    \filldraw[fill=black!20, draw=black,  opacity=0.1, name=blockchain] (5.5,-12.) rectangle (13.8,1);
    \node at (9.7, 0.7) {Blockchain, $\mathcal{L}$};
    \newinst{a}{Voter}
    \newinst[5.5]{m}{$\cdtl$}
    \newinst[3]{b}{$\ccon$ }
    \postlevel
    \draw[dashed, line width=0.2pt] (-2,-0.95) -- (14.4,-0.95);
    \filldraw[fill=white] (3.9,-0.7) rectangle (9.1,-1.2) node[midway] {
    Registration: $ \addr_{\mathsf{voter}}\rightarrow \cdtl$};
    \postlevel
    \postlevel
    \postlevel
    \postlevel
    \postlevel
    \IMess{a}{}{m}
    \node (A) [fill=white, draw=black] at ([yshift=1.5cm] mess from){
    \begin{subprocedure}
    \procedure[linenumbering, bodylinesep=2pt]{$\mathtt{Register}(\mathsf{VotePower}):$}
    {
      \text{Upon } \mathsf{GetStage}() = \mathsf{Registration}\\
      \colorbox{lightgray}{$\dcmcoinpk, \dcmcoinsk \leftarrow \aDDM.\dcmcreatecoin(\mathsf{VotePower})$}\\
      \text{Parse } (k,r, \mathsf{VotePower}) \leftarrow \dcmcoinsk \\
      \text{Set } \dcmcoinsk' = (k,r)\\
      \text{Compute } \dcmcoinpk' \leftarrow \tagscheme.\tagkeygen(\dcmcoinsk')\\
      \text{Set } \txd=(\dcmcoinpk', \mathsf{VotePower}) \text{ and store } \dcmcoinsk\\
      \text{Invoke } C_{\mathsf{DTL}}.\mathtt{AcceptRegistration}(\tx_{\mathsf{register}})
    }
    \end{subprocedure}
    };
    \node (B) [fill=white, draw=black] at ([xshift=1cm, yshift=1.5cm] mess to){
    \begin{subprocedure}
      \procedure[linenumbering, bodylinesep=2pt]{$\mathtt{AcceptRegistration}(\tx_{\mathsf{register}})$}{
        \text{Require } \mathsf{GetStage}() =  \mathsf{Registration}\\ 
        \text{Parse } (\dcmcoinpk', \mathsf{VotePower}) \leftarrow \txd\\
        {\text{Compute } \dcmcoinpk = \commit(\mathsf{VotePower}, \dcmcoinpk')}\\
        \text{Append } \dcmcoinpk \text{ to } \dpl\\
        \colorbox{lightgray}{\text{Obtain } $\dcmcoinaccstate \leftarrow \aDDM.\dcmaccumulatecoin(\dpl)$}\\
        \text{Append } \dcmcoinaccstate \text{ to } \drootlist
      }
    \end{subprocedure}
    };
    \draw [->] (A) -- node [midway,above] {$\tx_{\mathsf{register}}$} (B);
    \draw[dashed, line width=0.2pt] (-2.,-6) -- (14.4,-6);
    \filldraw[fill=white] (2.9,-5.75) rectangle (10.1,-6.25) node[midway] {\normalsize\normalsize {Unlinkable Confidential} Vote: $\cdtl\rightarrow \ccon$};
    \IMess{a}{}{m}
    \node (E) [fill=white, draw=black] at ([yshift=-3.5cm] mess from){ 
      \begin{subprocedure}
        \procedure[linenumbering, bodylinesep=2pt]{$\mathtt{vote}(\dcmcoinsk,\ek_i):$}
        {
          \text{Upon } \mathsf{GetStage}() = \mathsf{Voting}\\
          \text{Obtain } \dpl \text{ from } C_{\mathsf{DTL}}\\
          \colorbox{lightgray}{\text{Obtain } $(\nulli, \pi) \leftarrow \aDDM.\dcmredeemcoin(\dpl, \dcmcoinsk)$}\\
          \colorbox{lightgray}{\text{Obtain } $\dcmcoinaccstate \leftarrow \aDDM.\dcmaccumulatecoin(\dpl)$}\\
          \text{Set } \txw = (\dcmcoinaccstate, \nulli, \pi, \ek_{i})\\
          \text{Invoke } C_{\mathsf{vote}}.\mathtt{AcceptVote}(\tx_{\mathsf{vote}})
        }
      \end{subprocedure}
      };
    \node (F) [fill=white, draw=black] at ([xshift=.4cm, yshift=-3.5cm] mess to){
    \begin{subprocedure}
      \procedure[linenumbering, bodylinesep=2pt]{$\mathtt{AcceptVote}(\tx_{\mathsf{vote}}):$}
      {
        \text{Require } \mathsf{GetStage}() = \mathsf{Voting}\\ 
        \text{Parse } (\dcmcoinaccstate, \nulli, \pi, \ek_{i}) \leftarrow \txw\\
        \text{Require: } \pcskipln\\
        \text{- } \dcmcoinaccstate \in \drootlist \pcskipln\\
        \text{- } \nulli \notin \wnl \pcskipln\\
        \colorbox{lightgray}{\text{- } $\aDDM.\dcmverify(\dcmcoinaccstate, \nulli, \pi, m) = 1$}\\
        \text{Append } \nulli \text{ to } \wnl\\
        \text{Parse } (\zkproof, c) \leftarrow \pi \\ 
        C_{\mathsf{Conf}}.\mathtt{Fund}(\ek_{i}, c)
      } 
    \end{subprocedure}
    };
    \draw [->] (E) -- node [midway,above] {$\tx_{\mathsf{vote}}$} (F);
    \postlevel
    \postlevel
    \postlevel
    \postlevel
    \postlevel
    \postlevel
    \postlevel
    \postlevel
    \postlevel
    \postlevel
    \Mess{m}{$\mathtt{Fund}(\ek_{\mathsf{recv}}, c$)}{b}
    
    \end{sequencediagram}
    }
    \vspace{-0.3cm}
    \caption{Unlinkable Weighted Voting Using~\aDDM~(highlighted in gray): Protocol for Voter. 
      }
    \label{fig:voting-voter}
\end{figure*}

\begin{figure*}[!h]
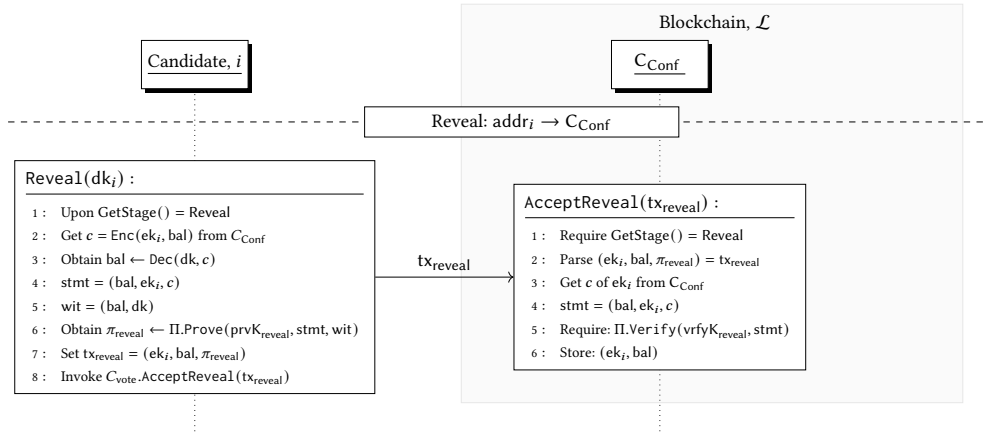

    \centering
    \resizebox{1.6\columnwidth}{!}{
    \begin{sequencediagram}
    \filldraw[fill=black!20, draw=black,  opacity=0.1, name=blockchain] (5.5,-5.6) rectangle (13.8,1);
    \node at (9.7, 0.7) {Blockchain, $\mathcal{L}$};
    \newinst{a}{Candidate, $i$}
    \newinst[6]{b}{$\ccon$ }
    \postlevel
    \draw[dashed, line width=0.2pt] (-2,-0.95) -- (14.4,-0.95);
    \filldraw[fill=white] (3.9,-0.7) rectangle (9.1,-1.2) node[midway] {
    Reveal: $ \addr_{{i}}\rightarrow \ccon$};
    \postlevel
    \postlevel
    \postlevel
    \postlevel
    \postlevel
    \postlevel
    \IMess{a}{}{b}
    \node (A) [fill=white, draw=black] at ([yshift=2cm] mess from){
    \begin{subprocedure}
    \procedure[linenumbering, bodylinesep=2pt]{$\mathtt{Reveal}(\dk_i):$}
    {
      \text{Upon } \mathsf{GetStage}() = \mathsf{Reveal}\\
      \text{Get } c=\mathtt{Enc}(\ek_i, \bal) \text{ from } C_{\mathsf{Conf}}  \\
      \text{Obtain } \bal \leftarrow \mathtt{Dec}(\dk, c)\\
      \statement = (\bal, \ek_i, c)\\
      \witness = (\bal, \dk)\\
      \text{Obtain } \pi_\mathsf{reveal} \leftarrow \zkprove(\pk_{\mathsf{reveal}}, \statement, \witness)\\
      \text{Set } \tx_{\mathsf{reveal}} = (\ek_i, \bal, \pi_{\mathsf{reveal}})\\
      \text{Invoke } C_{\mathsf{vote}}.\mathtt{AcceptReveal}(\tx_{\mathsf{reveal}})
    }
    \end{subprocedure}
    };
    \node (B) [fill=white, draw=black] at ([yshift=2cm] mess to){
    \begin{subprocedure}
      \procedure[linenumbering, bodylinesep=2pt]{$\mathtt{AcceptReveal}(\tx_{\mathsf{reveal}}):$}
      {
        \text{Require } \mathsf{GetStage}() = \mathsf{Reveal}\\ 
        \text{Parse } (\ek_i, \bal, \pi_{\mathsf{reveal}}) = \tx_{\mathsf{reveal}} \\
        \text{Get } c \text{ of } \ek_{i} \text{ from } \ccon \\
        \statement = (\bal, \ek_i, c)\\
        \text{Require: } \zkverify(\vk_{\mathsf{reveal}}, \statement)\\
        \text{Store: } (\ek_i, \bal)
      } 
    \end{subprocedure}
    };
    \draw [->] (A) -- node [midway,above] {$\tx_{\mathsf{reveal}}$} (B);
    \end{sequencediagram}
    }
    \vspace{-0.3cm}
    \caption{Unlinkable Weighted Voting Using~\aDDM: Protocol for Candidate. 
      }
    \label{fig:voting-candidate}
\end{figure*}

\section{Tagging Scheme Construction}
\paragraph{Tagging Scheme From Pseudorandom Function} 
We construct a tagging scheme from a One-way Pseudorandom function.  
\cref{fig:tagging-construction} provides the detailed construction. 

\begin{figure}[!h]
    \center
        \begin{pcvstack}[boxed]
            \procedure[space=keep,  bodylinesep=2pt]{$\tagsetup(\sparam)$}{
                F:\bset^{\lambda}\times\bset^{\lambda}\rightarrow \bset^{\lambda} \text{ be a pseudorandom function} 
            }
            \pcvspace
            \begin{pchstack}
            \procedure[space=keep,  bodylinesep=2pt]{$\tagkeygen(\dcmcoinsk)$}{
                \text{Parse: } (k,r) \leftarrow \dcmcoinsk\\
                \dcmcoinpk \leftarrow F(k, r)\\
                \pcreturn \dcmcoinpk
            }
            \procedure[space=keep,  bodylinesep=2pt]{$\tageval(\dcmcoinsk)$}{
                \text{Parse: } (k,r) \leftarrow \dcmcoinsk\\
                \nulli \leftarrow F(k, 0^\lambda)\\
                \pcreturn \nulli
            }
            \end{pchstack}
        \end{pcvstack}
    \caption{Construction of the Tagging Scheme used in DML.}
    \label{fig:tagging-construction}
\end{figure}

\begin{definition}[Security of Tagging Scheme]
\label{def:security-tagging-scheme}
A tagging scheme, \texttt{TAG},
    is said to be one-way if for any PPT adversary
    \adv, 
    $$\Pr[\tagOnewayExp(\sparam)] \leq \mathsf{negl}(\lambda)$$ 
    It is pseudorandom if, for any PPT adversary \adv, 
    $$|\Pr[\tagPRExp^0(\sparam)=1] - \Pr[\tagPRExp^1(\sparam)=1]|\leq \mathsf{negl}(\lambda)$$
\end{definition}

\begin{figure}[h]
    \centering
    \begin{pchstack}[boxed]
        \begin{pcvstack}
            \procedure[space=keep, bodylinesep=2pt]{$\tagOnewayExp(\sparam)$}{
                \tagpp \gets \tagsetup(\sparam)\\
                \dcmcoinsk \sample \mathcal{K} \\
                \dcmcoinpk \leftarrow \tagkeygen(\dcmcoinsk)     \\
                \nulli \leftarrow \tageval(\dcmcoinsk)   \\
                \dcmcoinsk^* \leftarrow \adv(\tagpp, \nulli, \dcmcoinpk)  \\
                \pcreturn \tageval(\dcmcoinsk^*) = \nulli
            }
        \end{pcvstack}
        \pchspace
        \begin{pcvstack}
            \procedure[space=keep,  bodylinesep=2pt]{$\tagPRExp^b(\sparam) $}{
                \tagpp \gets \tagsetup(\sparam)\\
                \dcmcoinsk \sample \mathcal{K} \\
                (\nulli,\dcmcoinpk) \leftarrow \fun{Ch}_{b,\dcmcoinsk}()   \\
                b'\leftarrow \adv(\tagpp, \nulli, \dcmcoinpk)  \\
                \pcreturn b'
            }

            \procedure[space=keep,  bodylinesep=2pt]{$\mathsf{Ch}_{b, \dcmcoinsk}() $}{
                \nulli \sample \mathcal{T}\\ 
                \dcmcoinpk \sample \mathcal{P}\\
                \pcif b = 0: \\
                \t\dcmcoinpk \define \tagkeygen(\dcmcoinsk)  \\
                \t\nulli \define \tageval(\dcmcoinsk)\\
                \pcreturn (\nulli, \dcmcoinpk)
            }
        \end{pcvstack}
    \end{pchstack}
    \caption{Definition of Oneway and Pseudorandom Experiments for the tagging scheme.}
    \label{fig:expdoublespending}
\end{figure}

In addition, we require the function $F$ to be \emph{collision resistant} on its entire inpput space, i.e., it is infeasible for any PPT adversary to find two $2\lambda$-bit strings $(k,r)$ and $(k',r')$ such that $F(k, r) = F(k', r')$. Looking ahead, our tagging scheme will be instantiated in the random oracle model, and hence the collision resistance required for our applications would immediately follow.


\section{Proofs}
\label{sec:missing-proofs}

\subsection{Proof of Theorem 1}
\repeattheorem{fixed}

We prove this theorem by proving this series of lemmas. 

\begin{lemma}
    \label{lemma:correctness-lemma1} Assume that \zksnark, $\Pi$, the underlying Merkle tree, $\mathtt{MT}$ are correct and the tagging scheme, $\mathtt{TAG}$ is deterministic, then $\fDDM$ is correct.
\end{lemma}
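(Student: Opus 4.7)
The plan is to trace through the honest execution in \cref{fig:construction-fixed-amt} and show that each intermediate check passes with probability $1$, then invoke completeness of the underlying \zksnark to conclude that $\dcmverify$ accepts. Fix $\lambda$, $n \in \poly(\lambda)$, messages $\dcmdata_i$, and an application message $m$. Consider the experiment in \cref{def:dcm}: run $\pp \gets \dcmsetup(\sparam)$, generate $(\dcmcoinpk_i, \dcmcoinsk_i) \gets \dcmcreatecoin(\dcmdata_i)$ for $i\in [n]$, accumulate $\dcmcoinaccstate \gets \dcmaccumulatecoin(\dcmncoins)$, and for each $i$ invoke $(\nulli_i, \dcmproof_i) \gets \dcmredeemcoin(\dcmncoins, \dcmcoinsk_i, m)$.

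First I would argue that the membership check inside $\dcmredeemcoin$ never returns $\bot$. Because $\tagscheme.\tagkeygen$ is deterministic, the value $\dcmcoinpk \gets \tagscheme.\tagkeygen(\dcmcoinsk_i)$ recomputed inside $\dcmredeemcoin$ equals the $\dcmcoinpk_i$ produced during $\dcmcreatecoin$, so the index $j$ with $\dcmcoinpk_j = \dcmcoinpk$ exists. Next I would observe that both $\dcmaccumulatecoin$ inside the experiment and the re-invocation of $\tree.\initialize$ inside $\dcmredeemcoin$ receive the identical sequence $\dcmncoins$; since $\tree.\initialize$ is a deterministic function of its input, the root $\dcmcoinaccstate$ computed inside $\dcmredeemcoin$ equals the one appearing in the statement of $\dcmverify$.

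Now I would apply correctness of the Merkle tree $\tree$ to conclude that the path $\adsproof \gets \tree.\prove(j, \dcmcoinpk, \dcmncoins)$ satisfies $\tree.\mkverify(j, \dcmcoinpk, \dcmcoinaccstate, \adsproof) = 1$. Combining this with the two tagging-scheme equalities $\dcmcoinpk = \tagscheme.\tagkeygen(\dcmcoinsk_i)$ and $\nulli_i = \tagscheme.\tageval(\dcmcoinsk_i)$ (which hold by construction, again because both tagging algorithms are deterministic), the tuple $\witness \define (j, \dcmcoinsk_i, \adsproof)$ is a valid witness for the statement $\statement \define (\dcmcoinaccstate, \nulli_i, m)$ with respect to the relation $R_{C_\mathsf{fixed}}$ defined in \cref{eq:fixed}.

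Finally, I would invoke the completeness of the \zksnark $\Pi$: since $(\statement, \witness) \in R_{C_\mathsf{fixed}}$ and $(\pk, \vk)$ was honestly generated by $\Pi.\setup$, the proof $\zkproof \gets \zkprove(\pk, \statement, \witness)$ satisfies $\zkverify(\vk, \statement, \zkproof) = 1$ with probability $1$, which is exactly the output of $\dcmverify(\dcmcoinaccstate, \nulli_i, \pi_i, m)$. The argument goes through simultaneously for every $i \in [n]$, yielding the conjunction required by \cref{def:dcm}. The main obstacle is purely bookkeeping: ensuring that the Merkle root, the index $j$, and the recomputed public key in $\dcmredeemcoin$ exactly match those used at commit time, which follows from determinism of $\tree.\initialize$ and $\tagscheme$.
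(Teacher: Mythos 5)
Your proof is correct and takes essentially the same approach as the paper, which simply asserts that correctness "can be verified by inspection" from the correctness of the \zksnark, the Merkle tree, and the tagging scheme. You have merely spelled out that inspection in full detail (determinism of $\tagscheme.\tagkeygen$ and $\tree.\initialize$ for the bookkeeping, Merkle tree correctness for the path, and \zksnark completeness for the final verification), which is exactly the intended argument.
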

\begin{proof}
    It can be verified by inspection that the scheme $\fDDM$ is correct by relying on the correctness of the each of three ingredients, i.e., namely \zksnark, Merkle tree, and tagging scheme. 

\end{proof}

\begin{lemma}[\fDDM: No One-more Redeeming]
\label{thm:dcmsecurity-nodoubleredeem}
Assuming that the \zksnark, $\Pi$, is simulation extractable, $F$ for the tagging scheme and the hash function used in \tree~are modeled as the random oracle, then the construction in~\cref{fig:construction-fixed-amt} satisfies no double redeeming as defined in~\cref{def:dcm-double-spending}.
\end{lemma}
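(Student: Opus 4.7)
The plan is to reduce any adversary $\adv$ winning $\dcmexpds_{\adv}$ with non-negligible probability to either a collision in the tagging scheme (and hence the random oracle $F$) or a soundness break of the underlying \zksnark. Suppose $\adv$ outputs $\{(\nulli_j,\dcmproof_j,m_j)\}_{j\in[n+1]}$ together with $\dcmncoins$ such that all $n+1$ proofs verify against $\dcmcoinaccstate \gets \tree.\initialize(\sparam,\dcmncoins)$ and all $n+1$ tags are pairwise distinct.

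First, I invoke the simulation extractability of $\Pi$ on each accepting proof $\dcmproof_j = \zkproof_j$ with statement $\statement_j \define (\dcmcoinaccstate, \nulli_j, m_j)$. By a union bound over the $n+1$ proofs, except with negligible probability I obtain witnesses $\witness_j \define (i_j, \dcmcoinsk_j, \adsproof_j)$ such that $(\statement_j,\witness_j) \in R_{C_{\mathsf{fixed}}}$; that is, $\adsproof_j$ is a valid Merkle opening of $\dcmcoinpk_j \define \tagscheme.\tagkeygen(\dcmcoinsk_j)$ to the root $\dcmcoinaccstate$ at index $i_j$, and $\nulli_j = \tagscheme.\tageval(\dcmcoinsk_j)$.

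Next, I apply the security of the Merkle tree $\tree$: since $\dcmcoinaccstate$ is the honestly computed root of $\dcmncoins = (\dcmcoinpk'_1,\ldots,\dcmcoinpk'_n)$, any valid opening at position $i_j$ must satisfy $\dcmcoinpk_j = \dcmcoinpk'_{i_j}$, else we produce a collision in the hash function underlying $\tree$ (modeled as a random oracle), which is negligible. Hence each extracted key $\dcmcoinpk_j$ lies in the set $\{\dcmcoinpk'_1,\ldots,\dcmcoinpk'_n\}$ of size $n$. By the pigeonhole principle applied to the $n+1$ extracted witnesses, there exist indices $a\neq b$ with $\dcmcoinpk_a = \dcmcoinpk_b$, i.e., $F(k_a,r_a) = F(k_b,r_b)$ where $\dcmcoinsk_j = (k_j,r_j)$.

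Now I split into two cases. If $(k_a,r_a) = (k_b,r_b)$, then $\nulli_a = F(k_a,0^\lambda) = F(k_b,0^\lambda) = \nulli_b$, contradicting the experiment's requirement that all $n+1$ tags be distinct. If $(k_a,r_a) \neq (k_b,r_b)$, then $F(k_a,r_a) = F(k_b,r_b)$ is a direct collision on $F$, which is negligible since $F$ is modeled as a random oracle on $\bset^{2\lambda}$. Summing the failure probabilities of extraction, Merkle-tree binding, and random-oracle collisions gives a negligible bound on $\Pr[\dcmexpds_{\adv}(\sparam)=1]$.

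The main technical subtlety I expect is handling the extraction: simulation extractability must be invoked simultaneously on $n+1$ proofs that may share statements (same root, potentially correlated messages), so I will rely on the fact that simulation-extractable \zksnarks admit straight-line or black-box multi-instance extraction so that the $n+1$ extractions succeed jointly with only a polynomial loss in $n$. The remaining pieces (Merkle binding and collision resistance of $F$) follow straightforwardly in the random oracle model.
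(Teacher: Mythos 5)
Your proposal is correct and follows essentially the same route as the paper's proof: extract witnesses from all $n+1$ accepting proofs via simulation extractability, use the random-oracle binding of the Merkle tree to force the extracted public keys into the $n$-element accumulated list, and then argue by pigeonhole that two distinct secret keys (distinct because their tags differ) collide under $\tagkeygen$, contradicting the collision resistance of $F$. Your explicit case split and the remark on multi-instance extraction merely spell out details the paper leaves implicit.
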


\begin{proof}
(\emph{Sketch}) Suppose there exists a PPT adversary $\adv$ such that $\Pr\left[ \dcmexpds_{\adv}(\sparam) \right]$ is non-negligible. This means that if we sample $\pp \gets \dcmsetup(\sparam)$ and provide $\adv$ with $\pp$, $\adv$ will output in PPT the lists $\{(\nulli_j, \dcmproof_j, m_j)\}_{j\in [n+1]}, \dcmncoins$ such that $\dcmverify(\dcmcoinaccstate, \nulli_j, \dcmproof_j, m_j)$ succeeds for all $j \in [n+1]$, where $\dcmcoinaccstate = \dcmaccumulatecoin(\dcmncoins)$, and for all $i \neq j \in [n+1]$, $\nulli_i \neq \nulli_j$. Since the proofs are simulation extractable, there is a PPT extractor we can run on the transcript of $\adv$ to obtain $\{(i_j,\dcmcoinsk_j,\adsproof_j)\}_{j\in [n+1]}$ such that\begin{equation*}
        \begin{aligned}
        & \tree.\mkverify(i_j, \dcmcoinpk_{i_j}^*, \dcmcoinaccstate, \adsproof_j)  \land  
        \\& \dcmcoinpk_{i_j}^* = \tagscheme.\tagkeygen(\dcmcoinsk_j) \land
        \\& \nulli_j = \tagscheme.\tageval(\dcmcoinsk_j)
        \end{aligned}
\end{equation*}
Since $\tree$ is a random oracle, with all but negligible probability, $\dcmcoinpk_{i_j}^* = \dcmcoinpk_{i_j}$ for all $j \in [n+1]$. Since for all $i \neq j \in [n+1]$, $\nulli_i \neq \nulli_j$, this means that for all $i \neq j \in [n+1]$, $\dcmcoinsk_i \neq \dcmcoinsk_j$. However, since $\dcmcoinpk_{i_j} = \tagscheme.\tagkeygen(\dcmcoinsk_j)$ for all $j \in [n+1]$, this means that we have been able to come up with some $\dcmcoinsk, \dcmcoinsk'$ such that $\tagkeygen(\dcmcoinsk)=\tagkeygen(\dcmcoinsk')$, as there are only at most $n$ $\dcmcoinpk_{i_j}$. This violates the collision resistance of the tagging scheme, giving us a contradiction. Therefore, the construction in the construction in~\cref{fig:construction-fixed-amt} satisfies no double redeeming as defined in~\cref{def:dcm-double-spending}.
\end{proof}

\begin{lemma}[\fDDM: Theft prevention]
\label{lemma:theft}
    Assume that zk-SNARK, $\Pi$, is simulation extractable, 
    $F$ for the tagging scheme and the hash function used in \tree~are modeled as the random oracle
    then the construction in~\cref{fig:construction-fixed-amt} provides theft prevention as defined in~\cref{def:dcm-theft}. 
\end{lemma}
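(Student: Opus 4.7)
The plan is to show by reduction that any PPT adversary $\adv$ winning $\dcmexptheft$ with non-negligible advantage can be used to recover a secret key $\dcmcoinsk$ that the adversary has never been shown, contradicting the one-wayness of $F$ in the random oracle model. The argument will combine the simulation extractability of $\Pi$ with the collision resistance of the Merkle-tree hash and of $F$, both modelled as random oracles.

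First, I would apply the simulation extractability of $\Pi$ to the proof component $\zkproof$ inside the adversary's forged $\dcmproof$. Since $\dcmredeemcoin\oracle$ only ever produces honest proofs (it owns the stored secret keys in $C$), the extractor yields a witness $(j^\ast, \dcmcoinsk, \adsproof)$ satisfying the relation $R_{C_{\mathsf{fixed}}}$ of~\cref{eq:fixed}; that is, $\tree.\mkverify(j^\ast,\dcmcoinpk,\dcmcoinaccstate,\adsproof)=1$, $\dcmcoinpk=\tagkeygen(\dcmcoinsk)$ and $\nulli=\tageval(\dcmcoinsk)$, where $\dcmcoinaccstate=\dcmaccumulatecoin(\dcmncoins)$. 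The winning condition $b_0$ together with the extractor gives us this witness with non-negligible probability.

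Second, I would argue by collision resistance of the Merkle-tree hash (random oracle) that $\dcmcoinpk$ must be exactly one of the leaves $\dcmcoinpk_{j^\ast}$ of $\dcmncoins$. Combined with condition $b_1$, which forces every element of $\dcmncoins$ to have been produced by $\dcmcreatecoin\oracle$, there is a unique honest pair $(\dcmcoinpk_{j^\ast},\dcmcoinsk_{j^\ast}^\ast)\in C$. The collision resistance of $F$ (again in the ROM) then implies $\dcmcoinsk=\dcmcoinsk_{j^\ast}^\ast$ with overwhelming probability.

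Finally, I would close the loop by observing that $\adv$ never directly sees $\dcmcoinsk_{j^\ast}^\ast$: $\dcmcreatecoin\oracle$ only returns $\dcmcoinpk_{j^\ast}=F(k_{j^\ast},r_{j^\ast})$, and even if $\adv$ invokes $\dcmredeemcoin\oracle$ on that key, the only additional leakage is the tag $\tageval(\dcmcoinsk_{j^\ast}^\ast)=F(k_{j^\ast},0^\lambda)$ together with a zero-knowledge proof. In the ROM both are independent random outputs, so a standard lazy-sampling argument bounds the probability that $\adv$ queries $F$ on the hidden preimage $(k_{j^\ast},r_{j^\ast})$ by $n\,q_H/2^\lambda$, where $n$ is the number of create-oracle queries and $q_H$ the number of random-oracle queries; both are polynomial, so this is negligible. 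Combining this with the extractor's non-negligible success probability yields the contradiction.

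The hardest part will be making the random-oracle recovery step tight in the presence of the redeem oracle, since the returned tag constitutes a second image of the hidden key under $F$. I expect to handle this cleanly in one of two ways: either (i) reason directly in the ROM with $F$ lazy-sampled, arguing preimage resistance is preserved even conditioned on the two revealed images, or (ii) construct a more syntactic reduction to one-wayness that guesses the targeted index among the polynomially many $\dcmcreatecoin\oracle$ queries, embeds a one-wayness challenge at that position, and simulates any redeem request on the guessed key using the SNARK zero-knowledge simulator (aborting if $\adv$ in fact redeems it). The second route loses only a polynomial factor and sidesteps any subtle analysis of the leaked tag.
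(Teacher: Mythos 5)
Your proposal is correct and, in its route (ii) form, is essentially the paper's proof: extract a witness via simulation extractability, use the random-oracle Merkle tree to force the extracted public key to be an honestly created leaf, then embed a one-wayness challenge for the tagging scheme at a guessed $\dcmcreatecoin\oracle$ index and answer redeem queries on that key with simulated proofs, relying on the fact that the forged proof is not one of the simulated ones. The only caveat is your parenthetical about ``aborting if $\adv$ in fact redeems it'' --- the reduction must not abort on such oracle queries but answer them with simulated proofs (the winning condition only forbids the final forgery's $(\dcmcoinaccstate,\nulli,m)$ from appearing in the redeem list), which is exactly what your main clause and the paper's argument do.
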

\begin{proof}
(\emph{Sketch}) Suppose there exists a PPT adversary $\adv$ such that $\Pr\left[ \dcmexptheft_{\adv}(\sparam) \right]$ is non-negligible. This means that if we sample $\pp \gets \dcmsetup(\sparam)$ and provide $\adv$ with $\pp$ and oracle access to $\dcmcreatecoin\oracle$ and  $\dcmredeemcoin\oracle$, $\adv$ will output in PPT the list $\dcmncoins$ and $\nulli, \dcmproof, m$ such that $\dcmverify(\dcmcoinaccstate, \nulli, \dcmproof, m)$ succeeds, every $\dcmcoinpk \in \dcmncoins$ corresponds to a call to $\dcmcreatecoin\oracle$, and $\nulli, m$ do not correspond to a call to $\dcmredeemcoin\oracle$ with $\dcmcoinaccstate$, where $\dcmcoinaccstate = \dcmaccumulatecoin(\dcmncoins)$. Since the proof is simulation extractable, there is a PPT extractor we can run on the transcript of $\adv$ to obtain $(i,\dcmcoinsk,\adsproof)$ such that\begin{equation*}
        \begin{aligned}
        & \tree.\mkverify(i, \dcmcoinpk_{i}^*, \dcmcoinaccstate, \adsproof)  \land  
        \\& \dcmcoinpk_{i}^* = \tagscheme.\tagkeygen(\dcmcoinsk) \land
        \\& \nulli = \tagscheme.\tageval(\dcmcoinsk)
        \end{aligned}
\end{equation*}
Since $\tree$ is a random oracle, with all but negligible probability, $\dcmcoinpk_{i}^* = \dcmcoinpk_{i}$. Note that $\dcmcoinpk_{i}$ corresponds to a call to $\dcmcreatecoin\oracle$ and using $\adv$, we are able to extract a $\dcmcoinsk$ such that $\nulli = \tagscheme.\tageval(\dcmcoinsk)$. We now show how we can use $\adv$ to break the one-wayness of the tagging scheme. We receive $\tagpp, \nulli, \dcmcoinpk$ from the challenger for the one-wayness experiment for the tagging scheme. We sample the rest of $\pp$ honestly and provide $\pp$ to $\adv$. We then simulate the responses to the oracle calls made by $\adv$. We guess which of the oracle calls to $\dcmcreatecoin\oracle$ will eventually correspond to the response returned by $\adv$ and on that call, return the $\dcmcoinpk$ we received from the challenger for the one-wayness experiment for the tagging scheme. In all other instances, we honestly simulate $\dcmcreatecoin\oracle$. Next, we get to how we simulate $\dcmredeemcoin\oracle$. If $\dcmcoinpk_{i} \ne \dcmcoinpk$, we can honestly simulate $\dcmredeemcoin\oracle$. However, if $\dcmcoinpk_{i} = \dcmcoinpk$, since we do not have $\dcmcoinsk$, we do not have the witness to generate the proof. So, we use simulated proofs instead. Since $\nulli$ and $\dcmcoinpk$ are consistent, i.e., generated honestly by the challenger for the one-wayness experiment for the tagging scheme, the simulated proof is indistinguishable owing to the zero-knowledge property of the zk-SNARK. Since $\nulli, m$ do not correspond to a call to $\dcmredeemcoin\oracle$, the proof $\dcmproof$ returned by $\adv$ will not be one of the simulated proofs we provided with all but non-negligible probability. Hence, our extraction will succeed. Finally, we simply forward back the extracted $\dcmcoinsk$. Since $\adv$ is PPT and succeeds with non-negligible probability, so do we. This violates the one-wayness of the tagging scheme, giving us a contradiction. Therefore, the construction in the construction in~\cref{fig:construction-fixed-amt} provides theft prevention as defined in~\cref{def:dcm-theft}.
\end{proof}

\begin{lemma}[\fDDM: Non-slanderability]\label{thm:dcmsecurity-nonslander}
    Assume that the function $F$ for the tagging scheme and the hash function used in \tree~are modeled as the random oracle
    \zksnark, $\Pi$, is simulation extractable, 
    then $\fDDM$ satisfies the non-slanderability property defined in \cref{def:dcm-nonslanderability}.
\end{lemma}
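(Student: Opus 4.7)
The plan is to reduce non-slanderability to the one-wayness of the tagging scheme $\tagscheme$, using the simulation extractability of the \zksnark to recover a secret key from the adversary's forged proof. Intuitively, if an adversary $\adv$ can force a tag collision $\nulli^{*} = \nulli$ with the honest tag, then the witness extracted from $\pi^{*}$ must contain a secret key $\dcmcoinsk^{*}$ satisfying $\tagscheme.\tageval(\dcmcoinsk^{*}) = \tagscheme.\tageval(\dcmcoinsk)$. Collision resistance of $\tageval$ in the random oracle model then forces $\dcmcoinsk^{*} = \dcmcoinsk$, giving $\adv$ the secret key of an honestly generated public key---something it should not be able to produce.

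Concretely, given a PPT $\adv$ winning $\dcmnonslander$ with non-negligible probability $\varepsilon$, I would build a reduction $\bdv$ against the one-wayness of $\tagscheme$. $\bdv$ receives $(\tagpp, \nulli^{\diamond}, \dcmcoinpk^{\diamond})$ from the one-wayness challenger, samples the remaining public parameters honestly (in particular, runs $\zksetup$ itself so that it knows the trapdoor needed to simulate proofs and extract witnesses), and invokes $\adv$ on the resulting $\pp$. Let $q$ be a polynomial bound on $\adv$'s calls to $\dcmcreatecoin\oracle$. $\bdv$ guesses an index $i^{\star} \in [q]$ uniformly at random, answers all other $\dcmcreatecoin\oracle$ queries honestly, and on the $i^{\star}$-th query returns $\dcmcoinpk^{\diamond}$. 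For $\dcmredeemcoin\oracle$ queries that do not involve $\dcmcoinpk^{\diamond}$, $\bdv$ answers honestly; for queries that do, $\bdv$ uses $\PrvSim$ to produce simulated proofs attached to $\nulli^{\diamond}$, which is correct by construction of the challenge.

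When $\adv$ eventually outputs $((\dcmcoinpk, \dcmcoinpk^{*}), (\nulli^{*}, \pi^{*}, m^{*}))$, $\bdv$ aborts unless $\dcmcoinpk = \dcmcoinpk^{\diamond}$, which occurs with probability at least $1/q$. In the non-aborting branch, the honest redeem invoked in the experiment produces tag $\nulli = \tageval(\dcmcoinsk) = \nulli^{\diamond}$, so the winning condition $\nulli^{*} = \nulli$ entails $\nulli^{*} = \nulli^{\diamond}$. Because the bit $\bit_3$ guarantees that $(\dcmcoinaccstate, \cdot, \nulli^{\diamond}, \cdot, m^{*})$ did not arise from the simulated oracle, simulation extractability of $\Pi$ yields a witness $(j, \dcmcoinsk', \adsproof')$ with $\tree.\mkverify(j, \dcmcoinpk_j, \dcmcoinaccstate, \adsproof')$, $\dcmcoinpk_j = \tagkeygen(\dcmcoinsk')$, and $\tageval(\dcmcoinsk') = \nulli^{\diamond}$. $\bdv$ outputs $\dcmcoinsk'$ to its challenger.

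It remains to argue that $\dcmcoinsk'$ actually inverts $\nulli^{\diamond}$, which it does by definition of $\tagOnewayExp$ (one-wayness only requires $\tageval(\dcmcoinsk') = \nulli^{\diamond}$). Thus $\bdv$ wins with probability at least $\varepsilon/q - \negl(\lambda)$, contradicting one-wayness. The main obstacle I anticipate is faithfully handling $\dcmredeemcoin\oracle$ calls on the target key $\dcmcoinpk^{\diamond}$ without knowing its secret: this is precisely why we need \emph{simulation}-extractability rather than plain knowledge soundness, since $\adv$ may observe polynomially many simulated proofs associated with $\nulli^{\diamond}$ before producing its forgery, and we must still extract a genuine witness from its fresh $(\nulli^{*}, m^{*})$ statement.
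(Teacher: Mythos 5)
Your proposal follows essentially the same route as the paper's proof: a reduction to the one-wayness of the tagging scheme that embeds the challenge public key into a guessed $\dcmcreatecoin\oracle$ query, answers redeem queries on that key with simulated proofs (justified by zero-knowledge), and uses simulation extractability on the adversary's fresh $(\nulli^{*}, m^{*})$ statement to extract a preimage of the challenge tag. Your added observations --- the explicit $1/q$ guessing loss and the remark that one-wayness only requires $\tageval(\dcmcoinsk') = \nulli^{\diamond}$ rather than recovering the exact secret key --- are correct refinements of the paper's sketch, not a different argument.
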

\begin{proof}
(\emph{Sketch}) Suppose there exists a PPT adversary $\adv$ such that $\Pr\left[ \dcmnonslander_{\adv}(\sparam) \right]$ is non-negligible. This means that if we sample $\pp \gets \dcmsetup(\sparam)$ and provide $\adv$ with $\pp$ and oracle access to $\dcmcreatecoin\oracle$ and  $\dcmredeemcoin\oracle$, $\adv$ will output in PPT $\dcmcoinpk, \dcmcoinpk^*$ and $\nulli^*, \dcmproof^*, m^*$ such that $\dcmcoinpk$ corresponds to a call to $\dcmcreatecoin\oracle$, $\nulli^* = \nulli$ where $(\nulli, \pi)\leftarrow\dcmredeemcoin\oracle((\dcmcoinpk, \dcmcoinpk^*), 1, m)$, $\dcmverify(\dcmcoinaccstate, \nulli^*, \dcmproof^*, m^*)$ succeeds, and $\nulli^*, m^*$ do not correspond to a call to $\dcmredeemcoin\oracle$ with $\dcmcoinaccstate$, where $\dcmcoinaccstate = \dcmaccumulatecoin(\dcmcoinpk, \dcmcoinpk^*)$. Since the proof is simulation extractable, there is a PPT extractor we can run on the transcript of $\adv$ to obtain $(i,\dcmcoinsk,\adsproof)$ such that\begin{equation*}
        \begin{aligned}
        & \tree.\mkverify(i, \dcmcoinpk_{i}^*, \dcmcoinaccstate, \adsproof)  \land  
        \\& \dcmcoinpk_{i}^* = \tagscheme.\tagkeygen(\dcmcoinsk) \land
        \\& \nulli^* = \tagscheme.\tageval(\dcmcoinsk)
        \end{aligned}
\end{equation*}
Since $\tree$ is a random oracle, with all but negligible probability, $\dcmcoinpk_{i}^* = \dcmcoinpk$ or $\dcmcoinpk_{i}^* = \dcmcoinpk^*$. Note that $\dcmcoinpk$ corresponds to a call to $\dcmcreatecoin\oracle$ and using $\adv$, we are able to extract a $\dcmcoinsk$ such that $\nulli = \nulli^* = \tagscheme.\tageval(\dcmcoinsk)$. We now show how we can use $\adv$ to break the one-wayness of the tagging scheme. We receive $\tagpp, \nulli, \dcmcoinpk$ from the challenger for the one-wayness experiment for the tagging scheme. We sample the rest of $\pp$ honestly and provide $\pp$ to $\adv$. We then simulate the responses to the oracle calls made by $\adv$. We guess which of the oracle calls to $\dcmcreatecoin\oracle$ will eventually correspond to the response returned by $\adv$ and on that call, return the $\dcmcoinpk$ we received from the challenger for the one-wayness experiment for the tagging scheme. In all other instances, we honestly simulate $\dcmcreatecoin\oracle$. Next, we get to how we simulate $\dcmredeemcoin\oracle$. If $\dcmcoinpk_{i} \ne \dcmcoinpk$, we can honestly simulate $\dcmredeemcoin\oracle$. However, if $\dcmcoinpk_{i} = \dcmcoinpk$, since we do not have $\dcmcoinsk$, we do not have the witness to generate the proof. So, we use simulated proofs instead. Since $\nulli$ and $\dcmcoinpk$ are consistent, i.e., generated honestly by the challenger for the one-wayness experiment for the tagging scheme, the simulated proof is indistinguishable owing to the zero-knowledge property of the zk-SNARK. Since $\nulli^*, m^*$ do not correspond to a call to $\dcmredeemcoin\oracle$ and $\nulli^* = \nulli$, the proof $\dcmproof$ returned by $\adv$ will not be one of the simulated proofs we provided with all but non-negligible probability. Hence, our extraction will succeed. Finally, we simply forward back the extracted $\dcmcoinsk$. Since $\adv$ is PPT and succeeds with non-negligible probability, so do we. This violates the one-wayness of the tagging scheme, giving us a contradiction. Therefore, the construction in the construction in~\cref{fig:construction-fixed-amt} satisfies the non-slanderability property as defined in~\cref{def:dcm-nonslanderability}.
\end{proof}

\begin{lemma}[\fDDM: Unlinkability]\label{thm:dcmsecurity-unlink}
    Assume that
    the function $F$ in the tagging scheme is modeled as a random oracle,  
    \zksnark, $\Pi$, is simulation extractable and zero-knowledge, and 
    \tree is instantiated with a random oracle, 
    then construction in~\cref{fig:construction-fixed-amt}  provides unlinkability as defined in~\cref{def:dcm-unlinkability}.
\end{lemma}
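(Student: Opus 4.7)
The plan is to prove unlinkability via a sequence of hybrid games, gradually replacing the challenge values $(\nulli_0, \dcmcoinpk_0, \dcmproof_0), (\nulli_1, \dcmcoinpk_1, \dcmproof_1)$ with uniformly random quantities so that the final hybrid is information-theoretically independent of the challenge bit $b$. Note that in the fixed-data setting $\dcmcoinpk$ is $\tagscheme.\tagkeygen(\dcmcoinsk) = F(k,r)$ and $\nulli = \tagscheme.\tageval(\dcmcoinsk) = F(k, 0^\lambda)$, where $(k,r) \sample \bset^\lambda \times \bset^\lambda$, so these are the only quantities through which $b$ propagates into the adversary's view (the Merkle root $\dcmcoinaccstate$ depends only on the unordered set $\{\dcmcoinpk_0,\dcmcoinpk_1\}$).

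First, in hybrid $G_1$ I would replace the real proofs $\zkproof_0, \zkproof_1$ in the challenge pairs by proofs produced by the zero-knowledge simulator $\sdv$ on statements $(\dcmcoinaccstate, \nulli_i, m)$. Indistinguishability from $G_0$ follows by a standard reduction to the zero-knowledge property of $\Pi$; the oracles $\dcmcreatecoin\oracle$ and $\dcmredeemcoin\oracle$ can still be answered honestly since the reduction controls all coin secret keys it creates. Next, in $G_2$ I would replace $(\nulli_0, \dcmcoinpk_0)$ by a pair sampled uniformly from $\mathcal{T} \times \mathcal{P}$, and in $G_3$ I would do the same for $(\nulli_1, \dcmcoinpk_1)$. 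Since $k_i \sample \bset^\lambda$ is uniform and $F$ is a random oracle, conditioned on the adversary never querying $F(k_i, \cdot)$, the pair $(F(k_i,0^\lambda), F(k_i, r_i))$ is distributed uniformly and independently in $\mathcal{T} \times \mathcal{P}$; the bad event that the adversary happens to query $F$ on $(k_i,\cdot)$ is bounded by $q/2^\lambda$ where $q$ is the polynomial number of oracle queries, hence negligible.

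In the final hybrid $G_3$, the challenge pairs $(\nulli_{0\oplus b}, \dcmproof_{0\oplus b}), (\nulli_{1\oplus b}, \dcmproof_{1\oplus b})$ consist of uniformly random tags, uniformly random public keys (which determine the Merkle root $\dcmcoinaccstate$ symmetrically in the two coins), and simulated proofs that depend only on these random values. Therefore the joint distribution presented to $\adv$ is independent of $b$, and $\Pr[G_3 = 1] = 1/2$. Combining the three hops by the triangle inequality yields the negligible advantage bound.

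The main obstacle I expect is the bookkeeping across the hybrid sequence: (i) ensuring that responses to $\dcmredeemcoin\oracle$ remain correctly distributed even when the challenger no longer knows the underlying $\dcmcoinsk$ for the challenge coins --- this is not an issue since the winning condition explicitly forbids $\adv$ from querying the redeem oracle on $\dcmcoinpk_0$ or $\dcmcoinpk_1$; and (ii) arguing that the simulator $\sdv$ can be used on arbitrarily chosen statements $(\dcmcoinaccstate, \nulli, m)$ that may later be re-randomized in $G_2$ and $G_3$ --- this requires invoking zero-knowledge \emph{before} replacing the tags so that the simulated proofs are already independent of the witness. The random-oracle step for the tagging scheme also implicitly requires that the random oracle is programmed consistently with any earlier adversarial queries on $(k_i,\cdot)$, but by the low collision probability of $k_i$ this programming succeeds with all but negligible probability.
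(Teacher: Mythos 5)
Your proposal is correct and follows essentially the same route as the paper's proof: a hybrid argument that first replaces the challenge proofs with zero-knowledge--simulated ones and then replaces the tag/public-key pairs with uniform values, after which the view is independent of $b$. The only differences are cosmetic --- the paper phrases the second step as an appeal to the pseudorandomness of the tagging scheme (which it in turn derives from $F$ being a random oracle, exactly as you argue directly) and walks the hybrids forward from $b=0$ and back to $b=1$ rather than stopping at a $b$-independent final game.
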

\begin{proof}
(\emph{Sketch}) We show that for any PPT adversary $\adv$, $\Pr\left[ \dcmexpunlink_{\adv}(\sparam) \right]$ is negligible via a series of hybrids. In our first hybrid, $H_0$, we deterministically set $b = 0$ in $\dcmexpunlink_{\adv}(\sparam)$, and in our last hybrid, $H_5$, we deterministically set $b = 1$ in $\dcmexpunlink_{\adv}(\sparam)$. We will show that $H_0 \approx H_5$. In hybrid $H_1$, we switch from providing honestly generated proofs to providing simulated proofs. Since the tags are consistent, the simulated proofs are indistinguishable owing to the zero-knowledge property of the zk-SNARK, and hence $H_0 \approx H_1$. Next, in $H_2$, we switch the tags with random strings. Owing to the pseudorandomness of the tagging scheme, $H_1 \approx H_2$. Since none of the output now depends on $b$, in $H_3$, we switch $b=0$ to $b=1$ and $H_2 \equiv H_3$. Next, in $H_4$, we switch the tags back from random strings to honestly generated tags. Owing to the pseudorandomness of the tagging scheme, $H_3 \approx H_4$. Finally, in $H_5$, we switch the simulated proofs back to honestly generated proofs. ince the tags are consistent, the simulated proofs are indistinguishable owing to the zero-knowledge property of the zk-SNARK, and hence $H_4 \approx H_5$. This completes the hybrid argument and the proof of the lemma. Therefore, the construction in the construction in~\cref{fig:construction-fixed-amt} provides unlinkability as defined in~\cref{def:dcm-unlinkability}. 
\end{proof}

\subsection{Proofs for Theorem 2}
\repeattheorem{arbit}
We prove this theorem by proving the following lemmas.
\begin{lemma}
\label{lemma:correctness-lemma1} 
Assume that \zksnark, $\Pi$, the underlying Merkle tree, $\tree$, $\mathtt{E}$, $\mathtt{Com}$ are correct and the tagging scheme, $\tagscheme$ is deterministic, then $\aDDM$ is correct.
\end{lemma}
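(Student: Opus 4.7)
The plan is to verify correctness by inspection, showing that each conjunct of the relation $R_{C_{\mathsf{arb}}}$ in~\cref{eq:arbit} holds when all algorithms are executed honestly; then $\dcmverify$ accepts by the correctness of the underlying \zksnark. I would first fix an arbitrary $\dcmdata_i \in \mathcal{D}$ and $m \in \mathcal{M}$ (with $m$ parsed so that its first component is $\ek$), and trace through the algorithms in \cref{fig:construction-arbitrary-amt}.

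Next, I would argue that every conjunct of $R_{C_{\mathsf{arb}}}$ is satisfied by the witness $\witness = (\dcmcoinsk, \adsproof, r_{\mathsf{enc}})$ produced by $\dcmredeemcoin$ on the statement $\statement = (\dcmcoinaccstate, \nulli, \ek, c)$. The key observation is that $\tagscheme.\tagkeygen$ and $\tagscheme.\tageval$ are \emph{deterministic} (by the definition of the tagging scheme), so when $\dcmredeemcoin$ re-parses $(k,r,\dcmdata) \gets \dcmcoinsk$ and recomputes $\dcmcoinpk' \gets \tagscheme.\tagkeygen(k,r)$, it obtains exactly the $\dcmcoinpk'$ used at creation time; consequently $\commit.\commitcreate(\dcmdata;\dcmcoinpk')$ produces the same $\dcmcoinpk$ that was placed into the list, ensuring both the commitment equation and the membership test $\dcmcoinpk \in \dcmncoins$ pass. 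By the correctness of \tree, $\tree.\prove$ then yields a path $\adsproof$ for which $\tree.\mkverify(i,\dcmcoinpk,\dcmcoinaccstate,\adsproof) = 1$. The tag equation $\nulli = \tagscheme.\tageval(\dcmcoinsk)$ and the encryption equation $c = \mathtt{E}.\mathtt{Enc}(\ek,\dcmdata;r_{\mathsf{enc}})$ hold by construction, and the predicate $P(\dcmdata) = 1$ is assumed to hold for the inputs considered in correctness (as with range constraints in similar schemes).

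Having established $(\statement,\witness) \in R_{C_{\mathsf{arb}}}$, I would finally invoke the \zksnark correctness to conclude that $\zkverify(\vk,\statement,\zkproof) = 1$ with probability $1$, which is exactly what $\dcmverify$ checks after parsing $\pi = (\zkproof,c)$ and reconstructing $\statement$ from $(\dcmcoinaccstate,\nulli,\ek,c)$. Since this holds for every $i$ in the correctness experiment, the overall probability equals $1$, completing the proof.

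There is no genuine obstacle here; the argument is a straightforward inspection. The only subtle point worth highlighting is the use of $\dcmcoinpk'$ as the \emph{randomness} for the commitment scheme, which would make the argument fail if $\tagkeygen$ were randomized — so the proof crucially uses determinism of the tagging scheme to guarantee that committer and redeemer arrive at the same $\dcmcoinpk$.
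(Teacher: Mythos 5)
Your proposal is correct and follows essentially the same route as the paper, which simply states that correctness "can be verified by inspection by relying on the correctness of each of the underlying primitives" (\zksnark, Merkle tree, tagging scheme, encryption, and commitment scheme) without spelling out the details. Your write-up fills in exactly that inspection, and your observation that determinism of $\tagkeygen$ is what lets the redeemer recompute the same $\dcmcoinpk$ used at creation is the right point to highlight.
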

\begin{proof}
 It can be verified by inspection that the scheme $\aDDM$ is correct by relying on the correctness of the each of underlying primitives, i.e., namely \zksnark, Merkle tree, tagging scheme, encryption, and commitment scheme.     
\end{proof}

\begin{lemma}[\aDDM: No One-more Redeeming]
\label{thm:addm-nodoubleredeem}
Assuming that the \zksnark, $\Pi$, is simulation extractable, $F$ for tagging scheme and the hash function used in \tree~are modeled as the random oracle, then the construction in~\cref{fig:construction-arbitrary-amt} satisfies no double redeeming as defined in~\cref{def:dcm-double-spending}.
\end{lemma}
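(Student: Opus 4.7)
The plan is to mimic closely the structure of the proof of Lemma~\ref{thm:dcmsecurity-nodoubleredeem} for $\fDDM$, but with the extra wrinkle coming from the commitment layer introduced in $\aDDM$. Suppose for contradiction that there is a PPT adversary $\adv$ such that $\Pr[\dcmexpds_{\adv}(\sparam)]$ is non-negligible. Then on input $\pp$, $\adv$ outputs a list $\dcmncoins$ of $n$ public keys and a family $\{(\nulli_j,\dcmproof_j,m_j)\}_{j\in[n+1]}$ such that $\dcmverify$ accepts each tuple with respect to the accumulator $\dcmcoinaccstate$ and all tags $\nulli_j$ are pairwise distinct.

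First, I would invoke simulation extractability of $\Pi$ on each of the $n+1$ accepting proofs to obtain witnesses $(\dcmcoinsk_j=(k_j,r_j,\dcmdata_j),\adsproof_j,r_{\mathsf{enc},j})$ satisfying the relation $R_{C_{\mathsf{arb}}}$ of Eq.~\eqref{eq:arbit}. In particular this produces, for each $j$, a public key $\widehat{\dcmcoinpk}_j=\commit.\commitcreate(\dcmdata_j;\tagscheme.\tagkeygen(k_j,r_j))$ for which $\tree.\mkverify(i_j,\widehat{\dcmcoinpk}_j,\dcmcoinaccstate,\adsproof_j)=1$ and such that $\nulli_j=\tagscheme.\tageval(k_j,r_j)$. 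Since $\tree$ is built from a random oracle, a standard argument (identical to the one used in the $\fDDM$ proof) bounds the probability that $\widehat{\dcmcoinpk}_j$ lies outside $\dcmncoins$ by a negligible quantity; so with overwhelming probability every $\widehat{\dcmcoinpk}_j\in\dcmncoins$.

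Now apply the pigeonhole principle: among the $n+1$ extracted witnesses, two of them, say indices $j\neq j'$, must map to the same element $\dcmcoinpk^{\star}=\widehat{\dcmcoinpk}_j=\widehat{\dcmcoinpk}_{j'}\in\dcmncoins$. Since $\nulli_j\neq\nulli_{j'}$ and $\nulli_\ell=F(k_\ell,0^{\lambda})$, the collision resistance of $F$ (in the random oracle model) implies $(k_j,r_j)\neq(k_{j'},r_{j'})$. Collision resistance of $F$ applied a second time to $\tagscheme.\tagkeygen(\cdot)=F(\cdot,\cdot)$ then implies that the commitment randomizers $\dcmcoinpk'_j=\tagscheme.\tagkeygen(k_j,r_j)$ and $\dcmcoinpk'_{j'}=\tagscheme.\tagkeygen(k_{j'},r_{j'})$ are distinct with overwhelming probability. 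Hence we obtain two distinct openings $(\dcmdata_j,\dcmcoinpk'_j)\neq(\dcmdata_{j'},\dcmcoinpk'_{j'})$ of the same commitment $\dcmcoinpk^{\star}$, which is a binding violation.

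The main obstacle will be the bookkeeping in the pigeonhole step, namely ruling out all the degenerate ways in which two extracted witnesses could produce the same $\dcmcoinpk^{\star}$ without yielding either a tag collision, a $\tagkeygen$ collision, or a commitment-binding break. This requires a clean case split on whether the extracted $(k,r)$ pairs coincide and whether the extracted data $\dcmdata$ coincide, and ensuring that in every case at least one of the three underlying primitives is broken. A standard reduction then turns the contradiction into an explicit adversary against either the collision resistance of $F$ (hence the random oracle), the simulation extractability of $\Pi$, or the binding of $\commit$, each of which contradicts the assumptions of the theorem.
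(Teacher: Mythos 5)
Your proposal takes essentially the same route as the paper, which disposes of this lemma in a single sentence by reducing to the $\fDDM$ argument (simulation extraction, the random-oracle Merkle tree forcing the extracted keys into $\dcmncoins$, and the pigeonhole on $n+1$ distinct tags versus $n$ accumulated keys) plus the binding of the commitment; your write-up simply makes that case analysis explicit. One caveat: in the sub-case where the two colliding witnesses carry the same data $\dcmdata_j=\dcmdata_{j'}$ but distinct randomizers $\dcmcoinpk'_j\neq\dcmcoinpk'_{j'}$, you do not actually get a binding violation under the paper's definition (which requires the two opened \emph{messages} to differ); you instead need collision resistance of the commitment function itself, which holds here only because $\commit$ is instantiated as a random-oracle hash. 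Your closing paragraph already anticipates exactly this case split, so this is a one-line refinement rather than a structural gap.
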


\begin{proof}
The proof of this lemma is similar to that of Lemma \ref{thm:dcmsecurity-nodoubleredeem} with the only difference being the binding of the commitment scheme. 
\end{proof}

\begin{lemma}[\aDDM: Theft prevention]
    Assume that zkSNARK, $\Pi$, is simulation extractable, 
    and the tagging scheme, \tagscheme, is one-way, 
    then the construction in~\cref{fig:construction-arbitrary-amt} provides theft prevention as defined in~\cref{def:dcm-theft}. 
\end{lemma}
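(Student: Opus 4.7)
The plan is to follow the same reduction strategy as in Lemma~\ref{lemma:theft}, adapted to account for the commitment and encryption layers that $\aDDM$ adds on top of $\fDDM$. Suppose for contradiction there is a PPT adversary $\adv$ winning $\dcmexptheft$ against $\aDDM$ with non-negligible probability; we will use it to invert the tagging scheme. First I would run $\adv$ on $\pp$ with the usual oracles $\dcmcreatecoin\oracle$ and $\dcmredeemcoin\oracle$, obtaining a forgery $(\dcmncoins, \nulli, \dcmproof, m)$ with $\dcmproof = (\zkproof, c)$ that makes $\dcmverify$ accept, where every $\dcmcoinpk \in \dcmncoins$ is honestly generated and no entry $(\dcmcoinaccstate, \cdot, \nulli, \cdot, m)$ appears in $\dcmredeemlist$.

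Next, applying simulation extractability of $\Pi$ to $\zkproof$, I would extract a witness $(\dcmcoinsk^\ast, \adsproof, r_{\mathsf{enc}})$ with $\dcmcoinsk^\ast = (k, r, \dcmdata)$ satisfying the relation $R_{C_{\mathsf{arb}}}$ of equation~(\ref{eq:arbit}). Because the hash underlying $\tree$ is modeled as a random oracle, the Merkle verification pins the extracted leaf $\dcmcoinpk^\ast$ to some $\dcmcoinpk_i \in \dcmncoins$ with all but negligible probability (else we obtain a hash collision). Now I need to transfer this equality to the underlying tag key: since $\dcmcoinpk_i = \commit.\commitcreate(\dcmdata_i; \dcmcoinpk_i')$ was produced honestly by $\dcmcreatecoin\oracle$ for some $(\dcmdata_i, \dcmcoinpk_i')$, and the extractor yields $\dcmcoinpk_i = \commit.\commitcreate(\dcmdata; \dcmcoinpk')$ with $\dcmcoinpk' = \tagscheme.\tagkeygen(\dcmcoinsk^\ast_{|_{k,r}})$, the binding property of $\commit$ forces $\dcmdata = \dcmdata_i$ and $\dcmcoinpk' = \dcmcoinpk_i'$, up to negligible loss.

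With this in hand, the reduction $\bdv$ to one-wayness of $\tagscheme$ proceeds exactly as in Lemma~\ref{lemma:theft}: on input a challenge $(\tagpp, \nulli^\star, \dcmcoinpk^\star)$, sample the rest of $\pp$ honestly; guess an index $i^\star$ among $\adv$'s queries to $\dcmcreatecoin\oracle$ and embed the challenge into that coin by setting $\dcmcoinpk_{i^\star}' \define \dcmcoinpk^\star$ and computing $\dcmcoinpk_{i^\star} \define \commit.\commitcreate(\dcmdata_{i^\star}; \dcmcoinpk^\star)$ for a freshly sampled $\dcmdata_{i^\star}$; answer all other $\dcmcreatecoin\oracle$ queries honestly. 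For any $\dcmredeemcoin\oracle$ query that does not involve the embedded coin, use the honest prover; for queries that do, produce a simulated zk-SNARK proof with tag $\nulli^\star$ and a ciphertext $c$ that is an encryption of $\dcmdata_{i^\star}$ under the supplied $\ek$. Zero-knowledge of $\Pi$ makes the simulated proof indistinguishable from an honest one, and the simulated transcript remains a valid answer to $\adv$. Finally, when the extracted witness for the forgery gives a secret key $\dcmcoinsk^\ast = (k, r, \dcmdata)$ with $\tagscheme.\tageval((k,r)) = \nulli^\star$, $\bdv$ outputs $(k,r)$ to its challenger.

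The main obstacle I expect is the simulation of $\dcmredeemcoin\oracle$ queries on the embedded coin: the relation~(\ref{eq:arbit}) now binds the ciphertext $c$ to the committed datum $\dcmdata$, so the simulator must supply a $c$ that is consistent with $\dcmdata_{i^\star}$, which requires knowing $\dcmdata_{i^\star}$ (fine, since $\bdv$ chose it) but forfeits the ability to claim that $\nulli^\star$ is unrelated to any previously revealed tag. The condition $(\dcmcoinaccstate, \cdot, \nulli, \cdot, m) \notin \dcmredeemlist$ in the winning condition precludes $\adv$ from trivially recycling a simulated proof with the same $(\dcmcoinaccstate, m)$, and simulation extractability ensures the extractor still succeeds on any non-recycled forgery; this is the step that must be argued carefully, mirroring the corresponding step in Lemma~\ref{lemma:theft}. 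Putting these pieces together, $\bdv$ inverts $\tagscheme$ with probability $\adv$'s success divided by the number of $\dcmcreatecoin\oracle$ queries, contradicting one-wayness.
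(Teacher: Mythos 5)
Your proof is correct and follows essentially the same route as the paper, whose own proof of this lemma is just a one-line deferral to the $\fDDM$ theft-prevention argument (Lemma~\ref{lemma:theft}): a reduction to one-wayness of the tagging scheme via simulation extractability, with simulated proofs answering redeem queries on the embedded challenge coin. Your expansion correctly identifies the two genuinely new ingredients the paper leaves implicit --- using the binding of $\commit$ to pin the extracted $(\dcmdata,\dcmcoinpk')$ to the honestly generated ones, and having the reduction produce a consistent ciphertext for simulated redeems --- so it is a faithful (indeed more detailed) version of the intended argument.
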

\begin{proof}
The proof of this lemma is similar to that of Lemma \ref{lemma:theft}. 
\end{proof}
\begin{lemma}[\aDDM: Non-slanderability]
    Assume that the function $F$ for the tagging scheme and the hash function used in \tree~are modeled as the random oracle,
    \zksnark, $\Pi$, is simulation extractable, 
    then $\aDDM$ satisfies the non-slanderability property defined in \cref{def:dcm-nonslanderability}.
\end{lemma}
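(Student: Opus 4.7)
The plan is to mirror the proof of Lemma \ref{thm:dcmsecurity-nonslander} for $\fDDM$, adapting it to account for the additional commitment layer and the encryption component present in $\aDDM$. Suppose for contradiction that a PPT adversary $\adv$ wins $\dcmnonslander_{\adv}(\sparam)$ with non-negligible probability. Then $\adv$ outputs some $(\dcmcoinpk, \dcmcoinpk^*)$ with $\dcmcoinpk$ honestly generated via $\dcmcreatecoin\oracle$, together with $(\nulli^*, \pi^*, m^*)$ such that $\dcmverify(\dcmcoinaccstate, \nulli^*, \pi^*, m^*) = 1$ and $\nulli^* = \nulli$, where $\nulli$ is the tag produced by the challenger's honest redeem call on $\dcmcoinpk$. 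Moreover, $(\nulli^*, \pi^*, m^*)$ is not the output of any prior $\dcmredeemcoin\oracle$ query.

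First, I would invoke simulation extractability of $\Pi$: since $\pi^*$ is a fresh accepting proof for statement $\statement^* = (\dcmcoinaccstate, \nulli^*, \ek^*, c^*)$ extracted from $m^*$, there is a PPT extractor producing a witness $(\dcmcoinsk^\dagger, \adsproof^\dagger, r_{\mathsf{enc}}^\dagger)$ satisfying relation $R_{C_{\mathsf{arb}}}$ from Eq.~\eqref{eq:arbit}. Parse $\dcmcoinsk^\dagger = (k^\dagger, r^\dagger, \dcmdata^\dagger)$. The extracted witness certifies both $\nulli^* = \tagscheme.\tageval((k^\dagger, r^\dagger))$ and that the committed public key $\dcmcoinpk^\dagger = \commit.\commitcreate(\dcmdata^\dagger; \tagscheme.\tagkeygen((k^\dagger, r^\dagger)))$ lies in the Merkle tree rooted at $\dcmcoinaccstate$. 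Since the Merkle hash is modeled as a random oracle, collision resistance implies $\dcmcoinpk^\dagger \in \{\dcmcoinpk, \dcmcoinpk^*\}$ except with negligible probability.

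Next, I would argue that the extracted $(k^\dagger, r^\dagger)$ must equal the secret $(k, r)$ produced for the honest $\dcmcoinpk$ by the challenger. The honest tag is $\nulli = \tagscheme.\tageval((k, r))$, and by assumption $\nulli^* = \nulli$, so $\tagscheme.\tageval((k^\dagger, r^\dagger)) = \tagscheme.\tageval((k, r))$. Since $\tageval$ is modeled via the random oracle $F(\cdot, 0^\lambda)$ (an injective restriction of a random function, collision-resistant on its domain), we get $(k^\dagger, r^\dagger) = (k, r)$ except with negligible probability. This yields a reduction to one-wayness of the tagging scheme: the reduction embeds the one-wayness challenge $(\tagpp, \nulli, \dcmcoinpk')$ into a guessed call to $\dcmcreatecoin\oracle$, simulates the associated $\dcmcoinpk = \commit.\commitcreate(\dcmdata; \dcmcoinpk')$ for a self-chosen $\dcmdata$, and uses the zero-knowledge simulator of $\Pi$ to answer any $\dcmredeemcoin\oracle$ query on that key (since the witness is unknown). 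Running the extractor on $\pi^*$ then yields $(k^\dagger, r^\dagger) = (k, r)$, which it forwards as the preimage.

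The main obstacle is ensuring that the reduction's simulated redeem answers remain indistinguishable from honest ones: unlike in the $\fDDM$ case, the proof's statement now additionally binds an encryption $c$ of $\dcmdata$, so the reduction must produce a valid ciphertext $c = \mathtt{E}.\mathsf{Enc}(\ek, \dcmdata; r_{\mathsf{enc}})$ using the same $\dcmdata$ it chose when planting the challenge, then simulate only the zkSNARK proof on statement $(\dcmcoinaccstate, \nulli, \ek, c)$ via zero-knowledge. Care is needed so that (i) the simulator never has to answer with a freshly extracted witness it does not possess, and (ii) the subsequent extraction from $\adv$'s proof $\pi^*$ does not coincide with one of the simulated proofs (guaranteed because $(\nulli^*, m^*)$ is new by the experiment's winning condition). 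Once these details are resolved, the standard chain of hybrids (honest proofs $\to$ simulated proofs $\to$ embedding one-wayness challenge) completes the reduction, yielding the desired contradiction.
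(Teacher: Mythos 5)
Your proposal is correct and follows essentially the same route as the paper, whose proof of this lemma consists of the single remark that it is analogous to the $\fDDM$ non-slanderability argument (extraction via simulation extractability, Merkle-tree collision resistance in the random-oracle model, and a reduction to one-wayness of the tagging scheme using simulated proofs for the planted key); your treatment of the commitment layer and the ciphertext is exactly the adaptation the paper intends. One minor point: since $\tageval$ depends only on the component $k$ of the secret key, equality of tags gives you $k^\dagger = k$ but not necessarily $r^\dagger = r$ --- this is harmless, because the one-wayness experiment only requires the returned key to reproduce the challenge tag.
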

\begin{proof}
The proof of this lemma is similar to that of Lemma \ref{thm:dcmsecurity-nonslander}. 
\end{proof}

\begin{lemma}[\fDDM: Unlinkability]
    Assume that the function $F$ for the tagging scheme and the hash function used in \tree~are modeled as the random oracle,
    \zksnark, $\Pi$, is simulation extractable and zero-knowledge, 
    $\mathsf{Com}$ is hiding and biding,  
    $\mathtt{E}$ is IND-CPA, 
    then the construction in~\cref{fig:construction-arbitrary-amt}  provides unlinkability as defined in~\cref{def:dcm-unlinkability}.
\end{lemma}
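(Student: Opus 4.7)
The plan is to follow the hybrid strategy of the $\fDDM$ unlinkability proof (\cref{thm:dcmsecurity-unlink}), augmented with two extra hops that account for the new components of $\aDDM$: the commitment binding data into each $\dcmcoinpk$, and the ciphertext carried inside each $\dcmproof$. I walk from $\dcmexpunlink_{\adv}$ with $b=0$ to the same experiment with $b=1$ through a sequence of computationally indistinguishable games, giving a bound of the form $\negl(\lambda)$ on the advantage. Throughout, the reductions must simulate the oracles $\dcmcreatecoin\oracle$ and $\dcmredeemcoin\oracle$ that $\adv$ may call before and after receiving the challenge pairs.

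The hybrid chain I have in mind is as follows. (i) Replace the two honest proofs $\zkproof_0,\zkproof_1$ inside $\dcmproof_0,\dcmproof_1$ with simulated proofs produced by the zkSNARK simulator $\sdv$ on the challenge statements; this is justified by zero-knowledge of $\Pi$ and, crucially, frees subsequent hops from needing a witness. (ii) Using pseudorandomness of $\tagscheme$ applied to each independently sampled $\dcmcoinsk'_b=(k_b,r_b)$, replace the two tags $\nulli_b=\tagscheme.\tageval(\dcmcoinsk'_b)$ and the two inner values $\dcmcoinpk'_b=\tagscheme.\tagkeygen(\dcmcoinsk'_b)$ by uniformly sampled elements of $\mathcal{T}$ and of the commitment-randomness space, respectively. (iii) With the commitment randomness now uniform and never used elsewhere, apply hiding of $\commit$ to swap each challenge $\dcmcoinpk_b=\commit.\commitcreate(\dcmdata_b;\dcmcoinpk'_b)$ with a commitment to a fixed dummy value. (iv) Apply IND-CPA of $\mathtt{E}$ to swap each ciphertext $c_b=\mathtt{E}.\mathsf{Enc}(\ek,\dcmdata_b;r_{\mathsf{enc}})$ inside $\dcmproof_b$ with an encryption of the same dummy value. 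At this final hybrid the joint view $(\dcmcoinpk_0,\dcmcoinpk_1,(\nulli_{0\oplus b},\dcmproof_{0\oplus b}),(\nulli_{1\oplus b},\dcmproof_{1\oplus b}))$ is independent of $b$, so $\adv$'s advantage is $0$; reversing the hops from the $b=1$ side closes the argument.

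The main obstacle is carrying the oracle simulations through every hop without ever holding the secret keys underlying the two challenge public keys. As in \cref{thm:dcmsecurity-unlink}, the winning-condition conjunct $\bit_1$ forbids $\adv$ from submitting $\dcmcoinpk_0$ or $\dcmcoinpk_1$ to $\dcmredeemcoin\oracle$, so all oracle responses can be produced honestly with locally sampled keys that the reduction generates itself, and the challenge values never need to be explained. A secondary subtlety concerns the IND-CPA reduction: the experiment samples $m\sample\mathcal{M}$ and the redeem algorithm parses its first component as $\ek$, so the reduction embeds the IND-CPA challenge public key into that slot of $m$; because an honestly derived $\ek$ from a uniform $\dk$ is distributed as the IND-CPA challenger's key (up to the $\fun{derive}$ map, which the preliminaries model as outputting a key distributed identically to a freshly generated one), this substitution is sound. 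Everything else---binding of $\commit$, correctness of $\tree$, simulation-extractability of $\Pi$---is not needed here and was already used in the preceding security lemmas.
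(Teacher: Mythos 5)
Your proposal is correct and follows essentially the same hybrid route as the paper's proof: simulate the two redemption proofs via the zero-knowledge property of $\Pi$, randomize the tags (jointly with the derived commitment randomness $\dcmcoinpk'$) via pseudorandomness of \tagscheme, and replace the ciphertexts by encryptions of a fixed value via IND-CPA, after which the challenge view is independent of $b$ and the hops are undone on the $b=1$ side. The only deviation is your extra commitment-hiding hop, which is harmless but superfluous---the challenge public keys are handed to \adv in an order that does not depend on $b$, so they never need to be modified, and the paper's hybrid chain accordingly omits that step; your added care about simulating the oracles and embedding the IND-CPA key into the first component of $m$ fills in details the paper's sketch glosses over.
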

\begin{proof}
(\emph{Sketch}) 
The main difference between the two constructions is the use of commitment and encryption schemes. 
The hiding property of $\commit$ ensures that the adversary gains no information from the challenged public keys, thereby rendering it unable to distinguish which public key corresponds to which data.

We show that for any PPT adversary $\adv$, $\Pr\left[ \dcmexpunlink_{\adv}(\sparam) \right]$ is negligible via a series of hybrids. 
In our first hybrid, $H_0$, we deterministically set $b = 0$ in $\dcmexpunlink_{\adv}(\sparam)$, and in our last hybrid, $H_5$, we deterministically set $b = 1$ in $\dcmexpunlink_{\adv}(\sparam)$. 
We will show that $H_0 \approx H_5$. 
In hybrid $H_1$, we switch from providing honestly generated proof and ciphertext pair to providing simulated proof and an encryption of $0$. Since the tags are consistent, the simulated proofs are indistinguishable owing to the zero-knowledge property of the zk-SNARK and the encryption of $0$ is indistinguishable owing to the IND-CPA security of the encryption scheme, and hence $H_0 \approx H_1$. Next, in $H_2$, we switch the tags with random strings. Owing to the pseudo-randomness of the tagging scheme, $H_1 \approx H_2$. Since none of the output now depends on $b$, in $H_3$, we switch $b=0$ to $b=1$ and $H_2 \equiv H_3$. Next, in $H_4$, we switch the tags back from random strings to honestly generated tags. Owing to the pseudorandomness of the tagging scheme, $H_3 \approx H_4$. Finally, in $H_5$, we switch the simulated proof and the random string back to honestly generated proof and the ciphertext. Since the tags are consistent, the simulated proofs are indistinguishable owing to the zero-knowledge property of the zk-SNARK and the ciphertext is indistinguishable owing to the IND-CPA security of the encryption scheme, and hence $H_4 \approx H_5$. This completes the hybrid argument and the proof of the lemma. Therefore, the construction in the construction in~\cref{fig:construction-arbitrary-amt} provides unlinkability as defined in~\cref{def:dcm-unlinkability}. 
\end{proof}

\pparagraph{On the Non-Blackbox Use of Cryptographic Primitives and Random Oracles} We remark that while a non-black-box use of a hash function that is modeled as a random oracle within a cryptographic protocol is not considered standard in theoretical cryptography (see Section 1.4 of~\cite{TCC:AgrKopWat18} for one example) particularly considering impossibility or lower bound results in the literature~\cite{STOC:ImpRud89}, it is \emph{not} uncommon to utilize concrete and practical hash functions (e.g., SHA256) in a non-black-box manner (say in a zero-knowledge cryptographic protocol). For instance, towards achieving a practical solution while at the same time having a theoretical justification, many prior works relied on the non-black-box usage of practical hash functions, e.g.,~\cite{CCS:AKSY22} and \cite{sasson2014zerocash}. We refer the reader to~\cite{STOC:GOSV14} for a discussion studying the gap between black-box and non-black-box constructions, particularly in the contexts of zero-knowledge and MPC protocols.

\end{document}